\documentclass[conference]{IEEEtran}
\ifCLASSINFOpdf
\else
\fi
%
%

\usepackage{graphicx} 
\usepackage{color}

\definecolor{Red}{rgb}{1,0,0}
\definecolor{Blue}{rgb}{0,0,1}
\definecolor{Olive}{rgb}{0.41,0.55,0.13}
\definecolor{Green}{rgb}{0,1,0}
\definecolor{MGreen}{rgb}{0,0.8,0}
\definecolor{DGreen}{rgb}{0,0.55,0}
\definecolor{Yellow}{rgb}{1,1,0}
\definecolor{Cyan}{rgb}{0,1,1}
\definecolor{Magenta}{rgb}{1,0,1}
\definecolor{Orange}{rgb}{1,.5,0}
\definecolor{Violet}{rgb}{.5,0,.5}
\definecolor{Purple}{rgb}{.75,0,.25}
\definecolor{Brown}{rgb}{.75,.5,.25}
\definecolor{Grey}{rgb}{.5,.5,.5}


\usepackage{booktabs} 
\usepackage{array} 
\usepackage{paralist} 
\usepackage{verbatim} 
\usepackage{subfigure} 
%
\usepackage[cmex10]{amsmath}
%
\interdisplaylinepenalty=2500

\usepackage{amsthm}

\theoremstyle{plain}

\newtheorem{theorem}{Theorem}

\newtheorem{corollary}{Corollary}
\newtheorem{claim}{Claim}
\newtheorem{lemma}{Lemma}
\newtheorem{conj}{Conjecture}
\newtheorem{bound}{Bound}

\theoremstyle{remark}

\newtheorem{remark}{Remark}

\theoremstyle{definition}
\newtheorem{definition}{Definition}

\usepackage{amssymb}
\usepackage{xspace}
\usepackage{mathrsfs}
\usepackage{amsopn}


%
%
%
%
%
%
%
%
%
%
%
%


\newcommand{\Mc}{\mathcal{M}}

\newcommand{\Uc}{\mathcal{U}}
\newcommand{\Vc}{\mathcal{V}}
\newcommand{\Wc}{\mathcal{W}}
\newcommand{\Xc}{\mathcal{X}}
\newcommand{\Yc}{\mathcal{Y}}
\newcommand{\Zc}{\mathcal{Z}}

\def\a{\alpha}

\def\eps{\epsilon}
\def\la{\lambda}

\def\ce{\mathfrak{C}}

\DeclareMathOperator\E{E}
\let\P\relax
\DeclareMathOperator\P{P}






\def\textiid{i.i.d.\@\xspace}
\newcommand\iid{\ifmmode\text{ i.i.d. } \else \textiid \fi}

\newcommand{\qmf}{\mathfrak{q}}


\hyphenation{op-tical net-works semi-conduc-tor}

\everymath{\textstyle}

\begin{document}
%
\title{On Marton's inner bound for broadcast channels}

\author{
\IEEEauthorblockN{Amin Gohari}
\IEEEauthorblockA{Dept. of EE\\
Sharif University of Technology\\
Tehran, Iran}
\and
\IEEEauthorblockN{Chandra Nair}
\IEEEauthorblockA{Dept. of IE\\
The Chinese University of Hong Kong\\
Hong Kong}
\and
\IEEEauthorblockN{Venkat Anantharam}
\IEEEauthorblockA{Dept. of EECS\\
University of California, Berkeley\\
Berkeley, CA, USA}}


%


\maketitle


\begin{abstract}
Marton's inner bound is the best known achievable region for a general
discrete memoryless broadcast channel. To compute Marton's inner bound
one has to solve an optimization problem over a set of joint distributions on
the input and auxiliary random variables. The optimizers turn out to
be structured in many cases. Finding properties of optimizers not only
results in efficient evaluation of the region, but it may also help one to
prove factorization of Marton's inner bound (and thus its optimality).
The first part of this paper formulates this factorization approach
explicitly and states some conjectures and results along this line.
The second part of this paper focuses primarily on the structure of
the optimizers. This section is inspired by a new binary inequality
that recently resulted in a very simple characterization of the sum-rate
of Marton's inner bound for binary input broadcast channels.
This prompted us to investigate whether this inequality can be extended to
larger cardinality input alphabets. We show that several of the results for the
binary input case do carry over for higher cardinality alphabets and we
present a collection of results that help restrict the search space of
probability distributions to evaluate the boundary of Marton's
inner bound in the general case. We also prove a new inequality
for the binary skew-symmetric broadcast channel that yields a very
simple characterization of the entire Marton inner bound for this channel.
\end{abstract}
\IEEEpeerreviewmaketitle

\section{Introduction}
A broadcast channel \cite{cov72} models a communication scenario where a single sender wishes to communicate multiple messages to many receivers. A two receiver discrete memoryless broadcast channel consists of a sender $X$ and two receivers $Y, Z$. The sender maps a pair of messages $M_1, M_2$ to a transmit sequence $X^n(m_1,m_2) (\in \Xc^n)$ and the receivers each get a noisy version $Y^n (\in \Yc^n), Z^n (\in \Zc^n)$ respectively. Further $|\Xc|, |\Yc|, |\Zc| < \infty$ and $p(y_1^n,z^n|x^n) = \prod_{i=1}^n p(y_{i},z_{i}|x_i)$. For more details on this model and a collection of known results please refer to Chapters 5 and 8 in \cite{elk12}. We also adopt most of our notation from this book.

The best known achievable rate region for a broadcast channel is the following inner bound due to \cite{mar79}. Here we consider the private messages case.
\begin{bound}
\label{bd:MIB}
(Marton) The union of rate pairs $R_1, R_2$ satisfying the constraints
\begin{align*}
R_1 & < I(U,W;Y), \\
R_2 & < I(V,W;Z), \\
R_1 + R_2 & < \min\{I(W;Y), I(W;Z)\} + I(U;Y|W) \\
& \quad + I(V;Z|W) - I(U;V|W),
\end{align*}
for any triple of random variables $(U,V,W)$ such that $(U,V,W) \to X \to (Y, Z)$ is achievable. Further to compute this region it suffices \cite{goa09} to consider $|\Wc| \leq |\Xc|+4, |\Uc| \leq |\Xc|, |\Vc| \leq |\Xc|$.
\end{bound}


It is not known whether this region is the true capacity region since the traditional Gallager-type technique for proving converses fails to work in this case. This raises the question of whether Marton's inner bound has an alternative representation that is better amenable to analysis. We believe that central to answering this question is understanding properties of joint distributions $p(u,v,w,x)$ corresponding to extreme points of Marton's inner bound.
Our approach to this is twofold. Roughly speaking in the first part of this paper we find \emph{sufficient} conditions on the optimizing distributions $p(u,v,w,x)$ which would imply a kind of factorization of Marton's inner bound.
 Such a factorization would imply that Marton's region is the correct rate region. In the second part we find \emph{necessary} conditions on any optimizing $p(u,v,w,x)$. Unfortunately the gap between these sufficient and necessary conditions is still wide. However we discuss how the necessary conditions may enhance our understanding of the maximizers of the expression $I(U;Y)+I(V;Z)-I(U;V)$ and how it may be useful in proving the factorization of Marton's inner bound.
%

\subsection{Necessary conditions}
The question of whether Marton's inner bound matches one of the known outer bounds has been studied in several works recently \cite{nae07,naw08,goa09,jon09,nwg10}. Since we build upon these results in this work, a brief literature review is in order. It was shown in \cite{naw08} that a gap exists between Marton's inner bound and the best-known outer bound \cite{elg79} for the binary skew-symmetric (BSSC) broadcast channel (Fig. \ref{fig:bsscbc}) if a certain binary inequality, \eqref{eq:eq1} below, holds. A gap between the bounds was demonstrated for the BSSC in \cite{goa09} without explicitly having to evaluate the inner bound. The conjectured inequality for this channel was established in \cite{jon09} and hence Marton's sum-rate for BSSC was explicitly evaluated. The inequality was shown \cite{nwg10} to hold for all binary input broadcast channels thus giving an alternate representation to Marton's sum-rate for binary input broadcast channels.

\begin{theorem}\cite{nwg10}
\label{th:beq}
For all random variables $(U,V,X,Y,Z)$ such that $(U,V) \to X \to (Y,Z)$ and $|\Xc|=2$ the following holds
{\small \begin{equation}
\label{eq:eq1}
I(U;Y) + I(V;Z) - I(U;V) \leq \max \{ I(X;Y), I(X;Z) \}.
\end{equation}}
\end{theorem}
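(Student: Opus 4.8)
The plan is to eliminate the maximum by symmetry and then recast \eqref{eq:eq1} as a conditional inequality. Since both sides are invariant under the simultaneous relabelling $(U,Y)\leftrightarrow(V,Z)$, I would assume without loss of generality that $I(X;Y)\ge I(X;Z)$, so that the target becomes $I(U;Y)+I(V;Z)-I(U;V)\le I(X;Y)$. As $(U,V)\to X\to(Y,Z)$ forces $U\to X\to Y$, we have $I(X;Y)-I(U;Y)=I(X;Y\mid U)$, so the target is \emph{equivalent} to
\[
I(X;Y\mid U)+I(U;V)\ \ge\ I(V;Z),
\]
and note that $|\Xc|=2$ has not yet been used. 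Expanding the left-hand side with the Markov chain (using also $V\to X\to Y$) gives the identity
\[
I(X;Y\mid U,V)+I(U;V\mid Y)+\bigl(I(V;Y)-I(V;Z)\bigr)=I(X;Y\mid U)+I(U;V)-I(V;Z),
\]
and a dual expansion equal to $\bigl(I(X;Y\mid U)-I(V;X\mid U)\bigr)+I(U;V\mid X)+I(V;X\mid Z)$. In either form exactly one group of terms can be negative, and the task is to show that the remaining, manifestly nonnegative, terms dominate it.

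Next I would reduce the auxiliary alphabets to $|\Uc|=|\Vc|=2$. Fixing the joint law of $(V,X)$, the quantity $I(X;Y\mid U)+I(U;V)$ is $\E_U$ of a fixed function of the conditional law $p(v,x\mid u)$, so its minimum over admissible $U$ is the value at $p(v,x)$ of the lower convex envelope of that function; a Fenchel--Eggleston--Carath\'eodory argument then bounds $|\Uc|$, and here $|\Xc|=2$ is used essentially, because each symbol $u$ contributes only through the scalar $p(X{=}1\mid U{=}u)$; symmetrically one bounds $|\Vc|$. After this reduction the claim becomes a bounded optimization in the parameters $p=p(X{=}1)$, the conditional means $a_i=p(X{=}1\mid U{=}i)$ and $b_j=p(X{=}1\mid V{=}j)$, the $2\times 2$ law of $(U,V)$, and the two fixed concave output-entropy profiles $g_Y,g_Z\colon[0,1]\to\Real_{\ge 0}$ (each $g_\bullet$ concave, with $g_\bullet-\ell_\bullet$ vanishing at $0$ and $1$ for the affine conditional-entropy profile $\ell_\bullet$).

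The hard part will be that the hypothesis $I(X;Y)\ge I(X;Z)$ pins down the input distribution only at the single point $p$, whereas $I(V;Y)$ and $I(V;Z)$ depend on the conditional law $p(x\mid v)$: since $Y$ need not be ``less noisy'' than $Z$, one may have $I(V;Y)<I(V;Z)$, and this deficit must be absorbed by $I(X;Y\mid U,V)+I(U;V\mid Y)$. With $|\Xc|=2$ one has $I(V;Z)-I(V;Y)=\phi(p)-\E_V[\phi(b_V)]$ where $\phi=g_Z-g_Y$ is a difference of concave functions, while the hypothesis yields only $\phi(p)\le(\ell_Z-\ell_Y)(p)$, with $\phi-(\ell_Z-\ell_Y)$ vanishing at $0$ and $1$; hence the deficit is governed by how far the means $b_V$ spread away from $p$. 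But a large spread makes $H(X\mid V)$ large, and the Markov constraint $(U,V)\to X$ then forces the slack $I(U;V\mid X)$ to grow in step. Turning this trade-off into the exact inequality is the crux; after the cardinality reduction I expect it to collapse to a one- or two-parameter convexity estimate, tight in the configuration where $V$ essentially determines $X$ (as for the binary skew-symmetric channel), which would guide the bookkeeping.
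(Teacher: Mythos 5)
Your preliminary reductions are fine as far as they go: the symmetry WLOG, the identity $I(X;Y)-I(U;Y)=I(X;Y\mid U)$, and both expansions of $I(X;Y\mid U)+I(U;V)-I(V;Z)$ are consistent with the Markov chain. But the argument stops exactly where the theorem lives. You yourself label the remaining step ``the crux'': showing that the possible deficit $I(V;Z)-I(V;Y)$ (equivalently $I(V;X\mid U)$ in your dual form) is dominated by the nonnegative terms, using only the hypothesis $I(X;Y)\ge I(X;Z)$ at the single distribution $p(x)$. No argument is offered for this beyond the expectation that it ``collapses to a one- or two-parameter convexity estimate.'' In the actual proof (in \cite{nwg10}, whose outline Section~\ref{se:hc} of this paper recalls), that is precisely the non-trivial content: after the reduction of \cite{goa09} to $|\Uc|,|\Vc|\le|\Xc|$ with $X=f(U,V)$ deterministic, one enumerates the sixteen binary maps and must eliminate the XOR and AND configurations as candidate local maxima via first- and second-order perturbation conditions. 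The AND case in particular is delicate; compare Appendix~A, where even for the BSSC the analogous statement for $\alpha\ge 1$ needs the $g(x)$ analysis, and Appendix~B, where the natural strengthening of \eqref{eq:eq1} fails exactly in an AND configuration. Since the inequality is false already for $|\Xc|=3$ (Blackwell channel), binarity must enter in a sharp quantitative way, and your plan defers all of that to the unproven step, so it is not a proof.

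There is also a secondary gap in the reduction itself: the claimed bound $|\Uc|=|\Vc|=2$ is not justified as stated. With $p(v,x)$ fixed, the function you average over $u$ is $I(X;Y\mid U=u)-H(V\mid U=u)$, which depends on the full conditional $p(v,x\mid u)$ (through its $V$-marginal as well), not ``only through the scalar $p(X{=}1\mid U{=}u)$''; a bare Carath\'eodory argument on the lower convex envelope therefore yields a support bound of order $|\Vc|\,|\Xc|$, not $2$. The correct route is to invoke the result of \cite{goa09} ($|\Uc|,|\Vc|\le|\Xc|$ and $X=f(U,V)$ at an optimizer), which applies to your formulation too since your objective is $I(X;Y)$ minus the maximized functional $I(U;Y)+I(V;Z)-I(U;V)$. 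Even granting that reduction, the theorem remains unproved until the XOR- and AND-type configurations are eliminated, which is where the work of \cite{nwg10} actually lies.
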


This yields the following immediate corollary.
\begin{corollary}\cite{nwg10}
\label{co:binsum}
The maximum sum-rate achievable by Marton's inner bound for any binary input broadcast channel is given by
\begin{align*} \max_{p(w,x)} &\min\{I(W;Y), I(W;Z)\} + \P(W=0) I(X;Y|W=0) \\
& \qquad + \P(W=1) I(X;Z|W=1). \end{align*}
Here $\Wc =\{0,1\}$.
\end{corollary}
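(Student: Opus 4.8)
The plan is to recognise the sum-rate of Marton's inner bound as the supremum of a single information functional, check that the displayed expression is attained by a particular structured choice of $(U,V,W)$, and then obtain a matching converse by applying Theorem~\ref{th:beq} to the conditional laws given $W$.

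First I would note that, for a fixed triple $(U,V,W)$, the third constraint in Bound~\ref{bd:MIB} is the binding one for $R_1+R_2$, since
\begin{align*}
&\min\{I(W;Y),I(W;Z)\}+I(U;Y|W)+I(V;Z|W)-I(U;V|W)\\
&\qquad\le I(W;Y)+I(U;Y|W)+I(V;Z|W)\\
&\qquad= I(U,W;Y)+I(V;Z|W)\;\le\;I(U,W;Y)+I(V,W;Z).
\end{align*}
Hence the maximum sum-rate of Marton's inner bound equals
\[
S=\sup\big(\min\{I(W;Y),I(W;Z)\}+I(U;Y|W)+I(V;Z|W)-I(U;V|W)\big),
\]
the supremum being over all $(U,V,W)\to X\to(Y,Z)$, and attained since $|\Wc|\le|\Xc|+4$ suffices (Bound~\ref{bd:MIB}).

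For the lower bound I would take $\Wc=\{0,1\}$, fix an arbitrary $p(w,x)$, and put $U=X$ with $V$ constant on $\{W=0\}$, and $V=X$ with $U$ constant on $\{W=1\}$. Then $I(U;Y|W)=\P(W=0)I(X;Y|W=0)$, $I(V;Z|W)=\P(W=1)I(X;Z|W=1)$, and $I(U;V|W)=0$ as one of $U,V$ is deterministic given $W$; substituting into the expression for $S$ reproduces precisely the claimed quantity, and the chain of inequalities above certifies that the corresponding sum-rate point lies in Bound~\ref{bd:MIB}.

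For the converse, given any $(U,V,W)$ I would apply Theorem~\ref{th:beq} to the conditional distribution given each value $W=w$ — the Markov chain $(U,V)\to X\to(Y,Z)$ survives conditioning and $|\Xc|=2$ — to get $I(U;Y|W=w)+I(V;Z|W=w)-I(U;V|W=w)\le\max\{I(X;Y|W=w),I(X;Z|W=w)\}$, and then average over $w$ to obtain
\[
S\le\sup_{p(w,x)}\Big(\min\{I(W;Y),I(W;Z)\}+\sum_w p(w)\max\{I(X;Y|W=w),I(X;Z|W=w)\}\Big).
\]
Splitting the values of $W$ according to whether $I(X;Y|W=w)\ge I(X;Z|W=w)$ shows the right side equals the value attained by the deterministic construction of the previous paragraph for the same joint law of $(W,X)$, so it remains only to reduce $|\Wc|$ to $2$. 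I expect this reduction to be the main obstacle: collapsing all symbols of one type to a single symbol does \emph{increase} the two conditional terms $\sum p(w)I(X;Y|W=w)$ and $\sum p(w)I(X;Z|W=w)$, by concavity of $p_X\mapsto I(X;Y)$ and $p_X\mapsto I(X;Z)$, but it \emph{decreases} $\min\{I(W;Y),I(W;Z)\}$, so the two effects must be reconciled by a careful convexity argument — e.g.\ a Fenchel--Eggleston/Carath\'eodory argument inside each type class, exploiting that $I(X;\cdot)$ vanishes at deterministic inputs — before one may conclude that two values of $W$ suffice. Granting the reduction, the converse matches the achievability bound and the corollary follows.
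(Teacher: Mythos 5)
Your overall route is the one the paper intends: the paper offers no proof of this corollary beyond the remark that it is ``immediate'' from Theorem~\ref{th:beq} (the statement is cited from \cite{nwg10}), and your achievability step (binary $W$, $U=X$, $V$ constant on $\{W=0\}$ and the reverse on $\{W=1\}$) together with the conditional application of \eqref{eq:eq1} given $W=w$ is exactly the intended mechanism. Up to that point the argument is sound: it shows the Marton sum-rate equals $\sup_{p(w,x)}\big[\min\{I(W;Y),I(W;Z)\}+\sum_w p(w)\max\{I(X;Y|W=w),I(X;Z|W=w)\}\big]$ with $W$ of \emph{unrestricted} cardinality.

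The genuine gap is the step you yourself flag and then grant: passing from that unrestricted-$W$ expression to the stated binary-$W$ form with the fixed assignment ($W=0\mapsto Y$-term, $W=1\mapsto Z$-term). This is not a routine Carath\'eodory bookkeeping step, and the repair you sketch (collapse all $w$ of each type into one symbol and ``reconcile by convexity'') provably does not close it as described: writing $S$ for the indicator of the type class, the merge gains exactly $\P(S=0)I(W;Y|S=0)+\P(S=1)I(W;Z|S=1)$ in the conditional terms (this is the concavity you invoke, via the Markov chain $W\to X\to(Y,Z)$), but it loses $I(W;Y|S)$ or $I(W;Z|S)$ in the $\min$ term; the net change is then of the form $\P(S=1)\big[I(W;Z|S=1)-I(W;Y|S=1)\big]$ (or its mirror image), which has no definite sign, because the per-symbol condition $I(X;Y|W=w)\ge I(X;Z|W=w)$ on a class does not control how informative $W$ itself is about $Y$ versus $Z$ on that class. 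So the merge can strictly decrease the objective, and your aside about $I(X;\cdot)$ vanishing at deterministic inputs does not enter. The reduction to $|\Wc|=2$ requires a separate argument (in \cite{nwg10} it exploits that binary $X$ lets one parametrize $p(x|w)$ by a scalar and run a support-reduction/concave-envelope analysis that also accounts for the $\min$ term); that argument is precisely the nontrivial content carried by the citation, and without it your proposal only establishes the weaker identity with unrestricted $|\Wc|$, which upper-bounds but is not visibly equal to the displayed expression.
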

Note that this characterization is  much simpler than the one given in Bound \ref{bd:MIB}.

Our results on the necessary conditions of an optimizer attempt to extend the new binary inequality to larger alphabets and to the entire rate region (rather than just the sum rate).

\subsection{Sufficient conditions}
Suppose we have certain properties of $p(u,v,w,x)$ that maximize Marton's inner bound. How can one use this to prove that Marton's inner bound is tight? The traditional Gallager-type technique requires us to consider the $n$-letter expression and to try to identify single-letter auxiliary random variables. If any such statement can be shown, it has to hold for $n=2$ in particular. In \cite{ggny11a}, the authors studied Marton's inner bound (sum-rate) via a two-letter approach and there they presented an approach to test whether Marton's inner bound is indeed optimal. The crux of the paper \cite{ggny11a}  is a certain {\em factorization idea} which if established would yield the optimality of Marton's inner bound for discrete memoryless broadcast channels. Further the authors used the same idea to show \cite{ggny11b} an example of a class of broadcast channels where Marton's inner bound is tight and the best known outer bounds are strictly loose\footnote{The previous works established a gap between the bounds and in this work it was shown that the outer bounds (both in the presence and absence of a common message) are strictly sub-optimal.}. The converse to the capacity region of this class of broadcast channels was motivated by the factorization approach. The authors also showed that the factorizing approach works if an optimizer $p(u,v,w,x_1x_2)$ for the two-letter Marton's inner bound satisfies certain conditions.

In this paper we provide more sufficient conditions that imply factorization by forming a more refined version of the two-letter approach \cite{ggny11b}. Simulations conducted on randomly generated binary input broadcast channels indicate that perhaps the factorization stated below (Conjecture \ref{co:conj}) is true; thus indicating that Marton's inner bound could be optimal.

For any broadcast channel $\qmf(y,z|x)$, define
$$T(X) := \max_{p(u,v|x)} I(U;Y) + I(V;Z) - I(U;V). $$
Note that $T(X)$ is a function of $p(x)$ for a given broadcast channel.
Similarly for any function $f(X)$, defined on $p(x)$ denote by
$$ \ce[f(X)] := \max_{p(v|x)} \sum_{v} p(v) f(X|V=v), $$
the upper concave envelope evaluated of $f(X)$ at $p(x)$. (Note that one can restrict the maximization to $|\Vc| \leq |\Xc|$ by Fenchel-Caratheodory arguments). A 2-letter broadcast channel is a product broadcast channel whose transition probability is given by
$\qmf(y_{1},z_{1}|x_1) \qmf(y_{2},z_{2}|x_2)$; i.e. they can be considered as parallel non-interfering broadcast channels. For this channel the function $T(X_1, X_2)$ is defined similarly as
$$\max_{p(u,v|x_1, x_2)} I(U;Y_{1},Y_{2}) + I(V;Z_{1},Z_{2}) - I(U;V). $$
\begin{conj}
\label{co:conj}
For all product channels, for all $\la \in [0,1]$ and for all $p(x_1,x_2)$
the following holds:
\begin{align*}
 &- \la H(Y_{1},Y_{2}) - \bar \la H(Z_{1}, Z_{2}) + T(X_1, X_2) \\
&  \leq \ce[ -\la H(Y_{1}) - \bar \la H(Z_{1}) + T(X_1) ] \\
& \qquad + \ce[ -\la H(Y_{2}) - \bar \la H(Z_{2}) + T(X_2)],
\end{align*}
where $\bar \la = 1-\la.$
\end{conj}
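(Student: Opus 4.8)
The plan is to start from a maximizer $(U,V)$ of the two-letter quantity, so $(U,V)\to(X_1,X_2)\to(Y_1,Z_1,Y_2,Z_2)$, and to exploit that, the channel being a product, one also has the Markov chains $(Y_1,Z_1)\to X_1\to(U,V,X_2,Y_2,Z_2)$ and $(Y_2,Z_2)\to X_2\to(U,V,X_1,Y_1,Z_1)$. The left side is then
\[
-\lambda H(Y_1,Y_2)-\bar\lambda H(Z_1,Z_2)+I(U;Y_1,Y_2)+I(V;Z_1,Z_2)-I(U;V),
\]
and the idea is to produce, for each coordinate $i$, admissible auxiliaries to feed into $\ce[-\lambda H(Y_i)-\bar\lambda H(Z_i)+T(X_i)]$, using that $\ce[f(X)]\ge\sum_v p(v)\,f(X|V=v)$ for any legitimate prefix $V$ and that $T(X_i)$ is itself a maximum. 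The two Markov chains say that variables built from the \emph{other} coordinate's outputs together with $U,V$ are admissible auxiliaries for coordinate $i$; one concrete choice is to use the pair $(U,Y_2)$ and $V$ inside $T(X_1)$, and $U$ and $(V,Z_1)$ inside $T(X_2)$ (empty prefixes). One then expands $H(Y_1,Y_2)$, $I(U;Y_1,Y_2)$, $I(V;Z_1,Z_2)$ by the chain rule and regroups coordinate-wise. (Note the left side is affine in $\lambda$ and the right side convex in $\lambda$, so one might hope to reduce to $\lambda\in\{0,1\}$; but ``affine $\le$ convex'' is not inherited from the endpoints, so the $\lambda$-dependence must be carried throughout.)

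Carrying out that bookkeeping for the choice above — subadditivity of entropy absorbs $H(Y_1)+H(Y_2)\ge H(Y_1,Y_2)$, and the chain rule plus a Csisz\'ar-sum-type rearrangement handle the mutual-information terms — the sum of the two single-letter lower bounds, minus the left side, equals
\[
(1-\lambda)\,I(Y_1;Y_2)+\lambda\,I(Z_1;Z_2)-I(U;V)-I(V;Y_2\mid U)-I(U;Z_1\mid V).
\]
Since the right side of the conjecture is at least this sum of single-letter lower bounds, the whole statement reduces to showing that this residual is nonnegative for the optimizer: a ``surplus'' measuring how correlated the two coordinates' outputs are must dominate a ``deficit'' consisting of a second copy of $I(U;V)$ plus conditional informations that measure how far $U$ and $V$ reach into the other coordinate.

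The hard part is precisely this residual inequality, and as written it is false for an arbitrary joint law $p(u,v,x_1,x_2)$: make $X_1,X_2$ — hence $Y_1,Y_2$ and $Z_1,Z_2$ — independent while letting $U,V$ carry shared randomness, and the surplus vanishes but the deficit does not. So the argument must invoke a structural property of the two-letter \emph{optimizer} that rules out such cross-correlations — morally, that for (nearly) independent inputs the optimizing $U,V$ (nearly) tensorize across the two coordinates — which is exactly the sort of statement the factorization program of \cite{ggny11a,ggny11b} aims at; equivalently, a single-letter upper bound on $T(X)$ for larger alphabets, i.e.\ the higher-cardinality analogue of \eqref{eq:eq1} in Theorem \ref{th:beq}, would let one replace $T(X_1,X_2)$ by a genuine entropy expression and then close the conjecture by a direct $\lambda$-weighted entropy/concavity computation. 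Getting either of these is the obstacle, and it is where the field is currently stuck. I would therefore first establish the conjecture unconditionally in the cases where such a single-letter bound on $T(X)$ is already known — binary inputs (via \eqref{eq:eq1}), and channels with a suitable less-noisy/degradedness ordering between the receivers — for which the residual degenerates to an elementary entropy inequality, isolating the general residual inequality as the remaining gap; a complementary approach is a KKT/perturbation analysis of the two-letter optimizer aimed directly at the needed tensorization.
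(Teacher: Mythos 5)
You have not produced a proof, and indeed you could not have: the statement you were given is Conjecture~\ref{co:conj}, which the paper itself does not prove. The paper only reports numerical evidence for it and recalls (citing \cite{ggny11a}) that it is known in three special cases --- $\la\in\{0,1\}$, one component channel deterministic, or one component having a more capable receiver --- and the rest of the paper is devoted to sufficient and necessary conditions on extremal distributions that might eventually yield it. So there is no ``paper proof'' to compare against, and your own write-up, which ends by conceding that the residual inequality is exactly the open obstruction, is an attack outline rather than a proof.

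That said, your bookkeeping is sound and is in the spirit of the paper's program. Feeding $(U,Y_2),V$ into the coordinate-$1$ bracket and $U,(V,Z_1)$ into the coordinate-$2$ bracket, the surplus of the two single-letter lower bounds over the two-letter expression is indeed
\begin{align*}
\bar\la\, I(Y_1;Y_2)+\la\, I(Z_1;Z_2)-I(U;V)-I(V;Y_2\mid U)-I(U;Z_1\mid V),
\end{align*}
and you are right that this is not nonnegative for an arbitrary $p(u,v,x_1,x_2)$, only (one hopes) for an optimizer of $T(X_1,X_2)$ --- which is precisely why the paper's Claim~1 imposes a structural hypothesis on the optimizer (there, $X_2$ a deterministic function of $U$) before a decomposition of this kind closes, and why Section~\ref{se:hc} hunts for necessary conditions on extremal $p(u,v|x)$. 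Note also that you did not use the freedom of the concave envelope $\ce[\cdot]$ at all (you evaluated the brackets at the fixed marginals $p(x_1),p(x_2)$ with empty prefix $W$); the paper's conjectures, e.g.\ Conjecture~\ref{co:exbin}, are phrased so that this extra averaging over $p(x_i)$ is available and is expected to be needed, since even the binary inequality \eqref{eq:eq1} can fail pointwise for the weighted version \eqref{eq:eqg}. Finally, the special cases you propose to settle first (binary input via Theorem~\ref{th:beq}, a more capable ordering, $\la$ at the endpoints) are exactly the cases already established in \cite{ggny11a}, so they would not constitute new progress; the genuine gap remains the residual (tensorization) inequality for the two-letter optimizer, which neither you nor the paper resolves.
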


\begin{remark}
The above conjecture was not formally stated  in \cite{ggny11a} as the authors did not have enough numerical evidence at that point; however subsequently the evidence has grown enough for some of the authors to have reasonable confidence in the validity of the above statement.
\end{remark}

It was shown \cite{ggny11a} that if Conjecture \ref{co:conj} holds then Marton's inner bound would yield the optimal sum-rate for a two-receiver discrete memoryless broadcast channel. Hence establishing the veracity of the conjecture becomes an important direction in studying  the optimality of Marton's inner bound.

The validity of Conjecture \ref{co:conj} was established \cite{ggny11a} in the following three instances:
\begin{enumerate}
\item $\la=0, \la = 1$, i.e. the extreme points of the interval,
\item If one of the four channels, say $X_1 \mapsto Y_{1}$ is deterministic,
\item In one of the components, say the first, receiver $Y_{1}$ is more capable\footnote{A receiver $Y$ is said to be {\em more-capable} \cite{kom75} than receiver $Z$ if $I(X;Y) \geq I(X;Z)~ \forall p(x).$} than receiver $Z_{1}$.
\end{enumerate}

Note that to establish the conjecture one needs to get a better handle on $T(X)$. What inequality \eqref{eq:eq1} shows is that when $|X|=2$ then
$$ T(X) = \max\{I(X;Y), I(X;Z)\}. $$
In this work, we seek generalizations of the inequality \eqref{eq:eq1} in two different directions:
\begin{itemize}
  \item {\em To the entire private messages region}:  Maximizing $I(U;Y) + I(V;Z) - I(U;V)$ for a given $p(x)$ is related to the sum-rate computation of Marton's inner bound. If one is interested in the entire private messages region, one must deal with a slightly more general form and this is presented in Section \ref{sse:gen}.
\item {\em Beyond binary input alphabets}:  The inequality \eqref{eq:eq1} itself fails to hold where $|\Xc|=3$, for instance in the Blackwell channel\footnote{Blackwell channel is a deterministic broadcast channel with $\Xc=\{0,1,2\}$, with the  mapping $\Xc \mapsto \Yc \times \Zc$ given by: $0 \mapsto (0,0), 1 \mapsto(0,1), 2 \mapsto (1,1)$.}. Therefore, we attempt to establish properties of the optimizing distributions $p(u,v|x)$ that achieve $T(X)$, in Section \ref{se:hc}.
\end{itemize}

\subsubsection{A generalized conjecture}
\label{sse:gen}
Much of the work in \cite{ggny11a} focused on the sum-rate. If one is interested in proving the optimality of the entire rate-region (for the private message case) then establishing the following equivalent conjecture would be sufficient. For $\alpha \geq 1$ define $T_\alpha(X) := \max_{p(u,v|x)} \alpha I(U;Y) + I(V;Z) - I(U;V).$
\begin{conj}
\label{co:marfac}
For all product channels, for all $\la \in [0,1]$, for all $\alpha \geq 1$, and for all $p(x_1,x_2)$ the following holds:
\begin{align*}
 &-(\alpha - \bar \la) H(Y_{1},Y_{2}) - \bar \la H(Z_{1}, Z_{2}) + T_\alpha(X_1, X_2) \\
& \quad \leq \ce[ -(\alpha - \bar \la) H(Y_{1}) - \bar \la H(Z_{1}) + T_\alpha(X_1) ] \\
& \qquad+ \ce[ -(\alpha - \bar\la) H(Y_{2}) - \bar \la H(Z_{2}) + T_\alpha(X_2) ].
\end{align*}
\end{conj}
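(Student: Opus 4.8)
\textbf{Proof plan for Conjecture \ref{co:marfac}.}

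The strategy is to reduce the $\alpha \geq 1$ statement to the $\alpha = 1$ statement (Conjecture \ref{co:conj}) rather than to attempt a direct proof. The key observation is that $T_\alpha(X)$ can be rewritten as a sum-rate-type quantity on an enlarged channel. Concretely, since $\alpha \geq 1$, write $\alpha = 1 + \beta$ with $\beta \geq 0$ and split the $Y$-receiver: consider $k$ copies of the $Y$-channel appearing alongside one copy of the $Z$-channel, or more generally attach a rate-scaled version. The plan is: (i) express $T_\alpha(X)$ as a limit (or exact value, via a rational approximation of $\alpha$) of $T$ evaluated on a channel where the first receiver is replaced by several independent copies of $Y$; (ii) observe that the ``$-(\alpha-\bar\la)H(Y) - \bar\la H(Z)$'' term is exactly the corresponding ``$-\la' H(\text{multi-}Y) - \bar\la' H(Z)$'' term after this replacement, with an appropriately chosen $\la'$; (iii) apply Conjecture \ref{co:conj} (now assumed, or proved in the special cases listed) on the enlarged product channel; and (iv) push the limit through the concave-envelope operator $\ce[\cdot]$, using that $\ce$ is continuous in the underlying function (uniformly, since the cardinality bound $|\Vc|\le|\Xc|$ makes it a maximum over a compact set) and that the functions $-(\alpha-\bar\la)H(Y)-\bar\la H(Z)+T_\alpha(X)$ depend continuously on $\alpha$.

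First I would make precise the reduction in step (i). For a rational $\alpha = p/q$ with $p \ge q \ge 1$, consider the product channel $\qmf^{\otimes p}_{Y} \otimes \qmf^{\otimes ?}_{Z}$; the cleanest device is to note that $\alpha I(U;Y) = I(U; Y^{(1)},\dots)$ is not literally achievable, so instead one works with the ``time-sharing'' representation: $T_\alpha(X)$ for a fixed $p(x)$ is a supremum over $p(u,v|x)$ of an affine-in-$\alpha$ functional, hence $T_\alpha(X)$ is convex and continuous in $\alpha$ on $[1,\infty)$. Then establish the inequality for $\alpha=1$ (this is Conjecture \ref{co:conj}) and, separately, verify it survives the substitutions. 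Actually the more robust route avoids rational approximation entirely: repeat verbatim the three proof techniques that established Conjecture \ref{co:conj} in \cite{ggny11a} --- the $\la \in \{0,1\}$ endpoint arguments, the deterministic-component argument, and the more-capable-component argument --- and check that each goes through with the $\alpha$-weight in place, because in each of those proofs the relevant entropy inequalities ($H(Y_1,Y_2) \le H(Y_1) + H(Y_2)$, data processing, and the more-capable comparison $I(X;Y)\ge I(X;Z)$) are insensitive to multiplying the $Y$-terms by a constant $\alpha \ge 1$. This would give Conjecture \ref{co:marfac} in exactly the same three instances, which is the honest target.

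The main obstacle is the step where one must control how the auxiliary $U$ in the two-letter optimizer for $T_\alpha(X_1,X_2)$ splits across the two components. In the $\alpha=1$ case one uses a chain-rule decomposition $I(U;Y_1,Y_2) = I(U;Y_1) + I(U;Y_2|Y_1)$ and then absorbs the conditioning into the single-letter auxiliaries; with the weight $\alpha$, the term becomes $\alpha I(U;Y_1) + \alpha I(U;Y_2|Y_1)$, and one needs the single-letter quantities $T_\alpha(X_i)$ and $T_\alpha(X_i|\text{(conditioning)})$ to still dominate after the concave-envelope step --- i.e.\ the conditioning random variable must be ``charged'' to $\ce[\cdot]$ correctly. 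Verifying that the bookkeeping closes, in particular that the extra $(\alpha - \bar\la)$ coefficient on $H(Y_1,Y_2)$ (rather than just $\bar\la$) is precisely what is needed to match $\alpha I(U;Y_1,Y_2)$ inside $T_\alpha$, is where the argument is delicate; I expect it to work but it requires carefully tracking which mutual-information terms are grouped as $H(Y)$-differences versus $H(Z)$-differences. A secondary, more serious obstacle is that the general ($\alpha$-weighted, arbitrary $\la$, arbitrary channel) case is genuinely open: just as Conjecture \ref{co:conj} is not known in full, neither is Conjecture \ref{co:marfac}, so the realistic deliverable is the equivalence ``Conjecture \ref{co:conj} $\Leftrightarrow$ Conjecture \ref{co:marfac}'' together with the three special cases, not an unconditional proof.
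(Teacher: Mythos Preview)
The paper does not prove this statement: Conjecture~\ref{co:marfac} is presented as an open conjecture, with only a remark that its truth would imply optimality of Marton's inner bound via a two-letter argument. There is therefore no ``paper's own proof'' to compare your proposal against.

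You correctly recognize this at the end of your proposal, where you concede that the realistic deliverable is not an unconditional proof. That honesty is appropriate, but it means most of the preceding plan is a sketch toward a result that neither you nor the paper can currently deliver. Two specific points on the plan itself:

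First, the reduction ``replace $Y$ by $k$ independent copies to turn $\alpha I(U;Y)$ into a sum-rate quantity'' does not work, and not merely because $\alpha$ may be irrational. If $Y_1,\ldots,Y_k$ are conditionally i.i.d.\ given $X$, then $I(U;Y_1,\ldots,Y_k)$ is \emph{not} $kI(U;Y)$; the copies are correlated through $U$. You note this (``not literally achievable'') but then do not offer a working substitute. The convexity/continuity of $T_\alpha$ in $\alpha$ is true but does not by itself let you pull an $\alpha>1$ inequality back to $\alpha=1$.

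Second, the claimed equivalence ``Conjecture~\ref{co:conj} $\Leftrightarrow$ Conjecture~\ref{co:marfac}'' is not something the paper asserts, and your argument for the nontrivial direction (Conj.~\ref{co:conj} $\Rightarrow$ Conj.~\ref{co:marfac}) is only the vague suggestion to ``repeat verbatim the three proof techniques'' from \cite{ggny11a}. Those techniques establish Conjecture~\ref{co:conj} only in three special cases, not in general, so even if each carries an $\alpha$-weight without difficulty you would obtain Conjecture~\ref{co:marfac} only in the same three special cases --- which is strictly weaker than the equivalence you state as the target. In short: the statement is open in the paper, it remains open after your proposal, and the proposed bridge from $\alpha=1$ to $\alpha\ge 1$ is not yet a bridge.
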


\begin{remark}
The sufficiency of the conjecture in proving the optimality of Marton's inner bound follows from a 2-letter argument similar to that found in \cite{ggny11a}.
However this conjecture is not equivalent to proving the optimality of Marton's inner bound; indeed it is a stronger statement.
\end{remark}

\section{Sufficient conditions}
A sufficient condition beyond those established in \cite{ggny11a} that imply factorization is the following:
\begin{claim}
For some $p(x_1,x_2)$ and a product channel if we have a $p(u,v|x_1,x_2)$ such
that
$$ T(X_1,X_2) = I(U;Y_1,Y_2) + I(V;Z_1, Z_2) - I(U;V),$$
and further $\P(X_2=x_2|U=u) \in \{0,1\}~\forall u,x_2$, then the factorization
conjecture holds.
\end{claim}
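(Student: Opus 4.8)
The plan is to certify the right-hand side of Conjecture~\ref{co:conj} by an explicit choice of time-sharing variables and auxiliaries for the two single-letter envelopes, and to check that the resulting lower bound already dominates the left-hand side evaluated at the given optimizer, for every $\la\in[0,1]$. Write $(U,V)$ for the assumed optimizer, so that $T(X_1,X_2)=I(U;Y_1,Y_2)+I(V;Z_1,Z_2)-I(U;V)$, and let $g$ be the map with $X_2=g(U)$. Because the channel is a product and $X_2=g(U)$, one first records the elementary facts: $(Y_2,Z_2)$ is independent of everything else given $X_2$; $(Y_1,Z_1)$ is independent of everything else given $X_1$; $Y_1\perp Y_2$ given $X_2$ and given $U$ (and likewise for $Z$); $H(Y_2\mid U)=H(Y_2\mid X_2)$ and $H(Z_2\mid U)=H(Z_2\mid X_2)$; and the Markov chains $Y_1-X_1-X_2-Y_2$ and $Z_1-X_1-X_2-Z_2$. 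In particular, conditioned on $\{X_2=x_2\}$ the chain $(U,V)\to X_1\to(Y_1,Z_1)$ is Markov, so $(U,V)$ restricted to that event witnesses a lower bound on $T$ at the input distribution $p(x_1\mid x_2)$.

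For the first channel I would take the time-sharing variable in $\ce[-\la H(Y_1)-\bar \la H(Z_1)+T(X_1)]$ to be $X_2$ itself (legitimate since $p(x_1)=\sum_{x_2}p(x_2)p(x_1\mid x_2)$), and use $(U,V)$ as the auxiliaries inside each block; this gives the lower bound
\[
A_1\ :=\ -\la H(Y_1\mid X_2)-\bar \la H(Z_1\mid X_2)+I(U;Y_1\mid X_2)+I(V;Z_1\mid X_2)-I(U;V\mid X_2).
\]
For the second channel I would not time-share, and simply use $T(X_2)\ge \la I(X_2;Y_2)+\bar \la I(X_2;Z_2)$, a convex combination of the values $I(X_2;Y_2)$ and $I(X_2;Z_2)$ (each at most $T(X_2)$, taking one auxiliary trivial and the other equal to $X_2$); this collapses to the clean lower bound $A_2:=-\la H(Y_2\mid X_2)-\bar \la H(Z_2\mid X_2)$ for $\ce[-\la H(Y_2)-\bar \la H(Z_2)+T(X_2)]$.

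I would then expand the left-hand side $L:=-\la H(Y_1,Y_2)-\bar \la H(Z_1,Z_2)+T(X_1,X_2)$ by the chain rule, invoking $X_2=g(U)$ in the form $I(U;A)=I(X_2;A)+I(U;A\mid X_2)$ and the product-structure identities $H(Y_1,Y_2\mid X_2)=H(Y_1\mid X_2)+H(Y_2\mid X_2)$ (and its $Z$-analogue). A direct computation then collapses $L-A_1-A_2$ to the single residual
\[
L-A_1-A_2\ =\ \bar \la\,\bigl(I(X_2;Y_1,Y_2)-I(X_2;Z_1,Z_2)\bigr)-I(X_2;V\mid Z_1,Z_2),
\]
where the $-I(U;V)$ terms match up because $I(U;V)=I(X_2;V)+I(U;V\mid X_2)\ge I(U;V\mid X_2)$. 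Since $\bar \la\le1$, it is enough to make this residual nonpositive at $\bar \la=1$, that is, to prove the single inequality
\[
I(X_2;Y_1,Y_2)\ \le\ I(X_2;V,Z_1,Z_2).
\]

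The step I expect to be the main obstacle is precisely this last inequality, which is false for a general admissible $(U,V)$ and must be extracted from optimality. The natural route is an exchange argument: compare $(U,V)$ with a perturbed pair in which some of the $X_2$-content carried (deterministically) by $U$ is transferred onto $V$ (admissible because $X_2$ is a function of $(X_1,X_2)$), and show that if $I(X_2;Y_1,Y_2)>I(X_2;V,Z_1,Z_2)$ then the perturbation strictly raises $I(\tilde U;Y_1,Y_2)+I(\tilde V;Z_1,Z_2)-I(\tilde U;\tilde V)$, contradicting optimality. The delicacy is that the blunt choice $\tilde V=(V,X_2)$ is too costly --- it pays the full $H(X_2\mid V)$ in the $-I(\tilde U;\tilde V)$ term, exactly because $X_2=g(U)$ --- so one wants a partial, possibly randomized, transfer of $X_2$ calibrated to the size of the gap while keeping both $\tilde U\to(X_1,X_2)\to(\text{outputs})$ and $\tilde V\to(X_1,X_2)\to(\text{outputs})$ Markov. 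Once $I(X_2;Y_1,Y_2)\le I(X_2;V,Z_1,Z_2)$ is established, summing $A_1$ and $A_2$ gives $L\le\ce[-\la H(Y_1)-\bar \la H(Z_1)+T(X_1)]+\ce[-\la H(Y_2)-\bar \la H(Z_2)+T(X_2)]$ for all $\la$, which is the inequality of Conjecture~\ref{co:conj} at this $p(x_1,x_2)$ --- the desired factorization.
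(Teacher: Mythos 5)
Your reduction is computed correctly: with $A_1$ (time-sharing by $X_2$, auxiliaries $(U,V)$ in each block) and $A_2$ (from $T(X_2)\ge\la I(X_2;Y_2)+\bar\la I(X_2;Z_2)$), the product structure and $X_2=g(U)$ indeed give $L-A_1-A_2=\bar\la\bigl(I(X_2;Y_1,Y_2)-I(X_2;Z_1,Z_2)\bigr)-I(X_2;V\mid Z_1,Z_2)$. But the inequality you then need, $I(X_2;Y_1,Y_2)\le I(X_2;V,Z_1,Z_2)$, is not just "the main obstacle" to be extracted from optimality --- it is \emph{false} under the hypotheses of the claim, so no exchange or perturbation argument can supply it. Take both component channels with $Y_i=X_i$ noiseless and $Z_i$ constant, and any $p(x_1,x_2)$ with $H(X_2)>0$: then $T(X_1,X_2)=H(X_1,X_2)$ is attained by $U=(X_1,X_2)$, $V$ trivial, which satisfies $\P(X_2=x_2\mid U=u)\in\{0,1\}$, yet $I(X_2;Y_1,Y_2)=H(X_2)>0=I(X_2;V,Z_1,Z_2)$. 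The conjecture still holds in that example; what fails is your pair of certificates: at $\la=0$ your $A_2$ credits the second envelope only with $I(X_2;Z_2)$ and throws away the possibly much larger $T(X_2)$, so $A_1+A_2$ cannot dominate $L$ in general. Hence the proposal, as structured, cannot be completed.

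The paper avoids any residual by a different split: it expands the two-letter expression with $Z_2$ as the conditioning (convexification) variable for the first-letter terms and $Y_1$ for the second-letter terms, leaving the cross terms $\alpha I(U;Y_2\mid Y_1)+I(V;Z_2\mid U,Y_1)-(\alpha-1)I(Y_1;Z_2\mid U)-I(Y_1;Z_2\mid U,V)$, which the hypothesis $X_2=g(U)$ collapses exactly to $\alpha I(X_2;Y_2\mid Y_1)$; since $\alpha I(X_2;Y_2)\le T_\alpha(X_2)$ at every input law, both brackets are dominated by the respective concave envelopes (time-shared by $Z_2$ and by $Y_1$) with nothing left over. In effect, the paper's choice lets the second envelope be evaluated at the conditional laws $p(x_2\mid y_1)$ with the full $T_\alpha(X_2)$ inside, which is precisely the slack your $A_2$ discards; any repair of your route would have to do the same. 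Note also that the paper's argument establishes the stronger $\alpha$-weighted statement (Conjecture~\ref{co:marfac}), whereas your argument only addresses Conjecture~\ref{co:conj}, i.e.\ the case $\alpha=1$.
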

\begin{proof}
Observe that (by elementary manipulations) we have
\begin{align*}
& -(\alpha - \bar \la) H(Y_{1},Y_{2}) - \bar \la H(Z_{1}, Z_{2}) + \alpha I(U;Y_{1},Y_{2})\\
& \quad + I(V;Z_{1},Z_{2}) - I(U;V)\\
& \quad = -(\alpha - \bar \la) H(Y_{1}|Z_{2}) - \bar \la H(Z_{1}| Z_{2}) + \alpha I(U;Y_{1}|Z_{2})\\
&\qquad + I(V;Z_{1}|Z_{2}) - I(U;V|Z_{2}) -(\alpha - \bar \la) H(Y_{2}|Y_{1})\\
&\qquad  - \bar \la H(Z_{2}| Y_{1}) + \alpha I(U;Y_{2}|Y_{1})
+ I(V;Z_{2}|U,Y_{1})\\
&\qquad  - (\alpha-1)I(Y_{1};Z_{2}|U) -
I(Y_{1};Z_{2}|U,V).
\end{align*}
Since $X_2$ is a function of $U$ we have
\begin{align*}
&  \alpha I(U;Y_{2}|Y_{1}) + I(V;Z_{2}|U,Y_{1}) - (\alpha-1)I(Y_{1};Z_{2}|U)\\
& \quad  -
I(Y_{1};Z_{2}|U,V) = \alpha I(X_2;Y_{2}|Y_{1}).\end{align*}
Hence
\begin{align*} & T(X_1,X_2) = -(\alpha - \bar \la) H(Y_{1}|Z_{2}) - \bar \la H(Z_{1}| Z_{2})\\
& \quad  + \alpha I(U;Y_{1}|Z_{2}) + I(V;Z_{1}|Z_{2}) - I(U;V|Z_{2}) \\
&\qquad -(\alpha - \bar \la) H(Y_{2}|Y_{1})\\&\qquad - \bar \la H(Z_{2}| Y_{1}) +  \alpha I(X_2;Y_{2}|Y_{1}) \\
& \leq \ce[ -(\alpha - \bar \la) H(Y_{1}) - \bar \la H(Z_{1}) + T_\alpha(X_1) ] \\
& \quad + \ce[ -(\alpha - \bar\la) H(Y_{2}) - \bar \la H(Z_{2}) + T_\alpha(X_2) ].
\end{align*}
\end{proof}
\begin{remark}
The main purpose of this claim is to demonstrate that if the distributions $p(u,v|x)$ that achieve $T(X)$, we will refer to them as {\em extremal distributions}, satisfy certain  properties, then we could employ these properties to establish the conjecture. In this paper we will establish some such properties of the extremal distributions.
\end{remark}

\subsection{A conjecture for binary alphabets}
A natural guess for extending the inequality \eqref{eq:eq1}, so as to compute $T_\alpha(X)$, is the following:
for any $\alpha \geq 1$, for all random variables $(U,V,X,Y,Z)$ such that $(U,V) \to X \to (Y,Z)$ and $|\Xc|=2$, the following holds
\begin{align}
\alpha I(U;Y) + I(V;Z) - I(U;&V) \leq\nonumber\\& \max \{ \alpha I(X;Y),I(X;Z) \}.\label{eq:eqg}
\end{align}
However this inequality turns out to be false in general. A counterexample is presented in Appendix B.

However the inequality is true in the following cases:
\begin{enumerate}
\item If $\alpha \leq 1$ then the inequality in \eqref{eq:eqg} holds: To see this let $Y'$ be obtained from $Y$ by erasing each received symbol with probability $1-\alpha$. It is straightforward to see that $I(U;Y') = \alpha I(U;Y)$ and $I(X;Y') = \alpha I(X;Y)$. Since
$$ I(U;Y') + I(V;Z) - I(U;V) \leq \max\{I(X;Y'), I(X;Z)\},$$
the inequality holds.
\item If $\alpha \geq 1$ at any $p(x)$ where $I(X;Y) \geq I(X;Z)$ the inequality holds since
\begin{align*}
& (\alpha - 1) I(U;Y) \leq (\alpha - 1) I(X;Y),  \\
& I(U;Y) + I(V;Z) - I(U;V) \leq I(X;Y). \end{align*}
\end{enumerate}

The inequality in Equation \eqref{eq:eqg} also holds for the binary skew-symmetric broadcast channel shown in Figure \ref{fig:bsscbc} (we assume $p=\frac 12$); quite possibly the simplest channel whose capacity region is not established. The proof is presented in Appendix A.
\begin{figure}
  \begin{center}
    \includegraphics[width=0.50\linewidth]{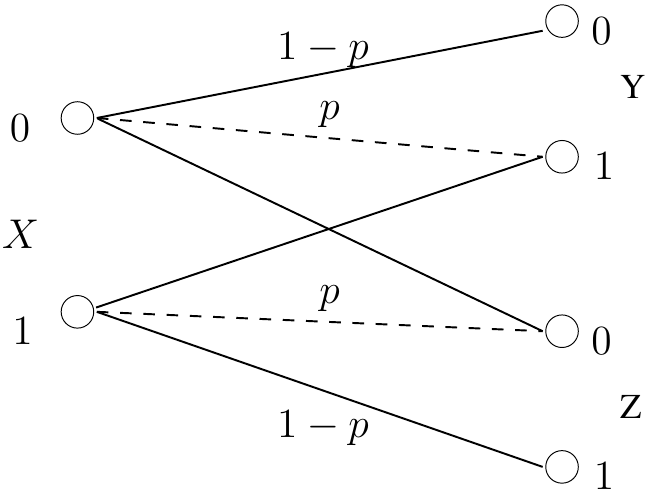}
  \end{center}
  \caption{The binary skew-symmetric broadcast channel}
\label{fig:bsscbc}
\end{figure}

By establishing  Equation \eqref{eq:eqg} for this channel, we are now able to precisely characterize Marton's inner bound region for this channel. In particular it is straightforward to see that for $\alpha \geq 1,$ if $\Mc$ represents Marton's inner bound, then

\begin{align*}  & \max_{(R_1,R_2) \in \Mc}  \alpha R_1 + R_2 \\
&=\quad  \max_{p(w,x)} ~~  \Big(\min\{I(W;Y), I(W;Z)\} + (\alpha-1) I(W;Y)  \\
& \qquad\qquad\qquad+ \alpha \P(W=0)I(X;Y|W=0) \\&\qquad\qquad\qquad+ \P(W=1)I(X;Z|W=1)\Big).
\end{align*}
A similar statement holds for when the roles of $Y,Z$ are interchanged.

Based on simulations and other evidence we propose the following conjecture.
\begin{conj}
\label{co:exbin}
For all $\alpha \geq 1$, for all $(U,V) \to X \to (Y,Z)$ with $|\Xc|=2$, we have
\begin{align*}
&-(\alpha - \bar \la) H(Y) - \bar \la H(Z) + T_\alpha(X) \\
&  \leq \ce[-(\alpha - \bar \la) H(Y) - \bar \la H(Z) + \max\{\alpha I(X;Y), I(X;Z)\}].
\end{align*}
\end{conj}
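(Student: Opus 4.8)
The plan is to bring in the maximizers and reduce the statement to a single‑letter inequality in which the concave envelope no longer appears. Let $(U,V)$ attain $T_\alpha(X)$, so that $T_\alpha(X)=\alpha I(U;Y)+I(V;Z)-I(U;V)$ with $(U,V)\to X\to(Y,Z)$; the maximizer exists because $|\Uc|,|\Vc|\le|\Xc|=2$ makes the feasible set compact. Since $\ce[\,\cdot\,]$ is a supremum over all ways of writing $p(x)$ as a mixture, and $p(u\,|\,x)$ is one such mixture, we may take $U$ as the splitting variable on the right. Substituting and cancelling each unconditional entropy against its conditioned version (using $H(Y)-H(Y|U)=I(U;Y)$ and $H(Z)-H(Z|U)=I(U;Z)$), Conjecture~\ref{co:exbin} follows once we establish, for this $(U,V)$,
\[
\bar\la\bigl(I(U;Y)-I(U;Z)\bigr)+I(V;Z)-I(U;V)\ \le\ \sum_u\P(U=u)\,\max\{\alpha I(X;Y|U=u),\,I(X;Z|U=u)\}.\qquad(\star)
\]
Splitting along $V$ instead yields the companion inequality $\alpha I(U;Y)+\la I(V;Z)-(\alpha-\bar\la)I(V;Y)-I(U;V)\le\sum_v\P(V=v)\max\{\alpha I(X;Y|V=v),I(X;Z|V=v)\}$, to be kept in reserve for the ranges of $p(x)$ not settled by $(\star)$.

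I would then clear away the already‑understood regimes. Wherever $I(X;Y)\ge I(X;Z)$ at $p(x)$, the second case listed after \eqref{eq:eqg} gives $T_\alpha(X)\le\max\{\alpha I(X;Y),I(X;Z)\}$ directly, so the left side of the conjecture is dominated by the very function being enveloped and there is nothing to prove; hence we may assume $I(X;Y)<I(X;Z)$. For $(\star)$ itself, use $I(V;Z)-I(U;V)=I(V;Z|U)-I(V;U|Z)\le I(V;Z|U)$, so the left side of $(\star)$ is at most $\bar\la\bigl(I(U;Y)-I(U;Z)\bigr)+I(V;Z|U)$, while the right side is at least $\sum_u\P(U=u)I(X;Z|U=u)=I(X;Z|U)\ge I(V;Z|U)$ by conditional data processing. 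Thus $(\star)$ — hence the conjecture — holds outright whenever the optimal $U$ satisfies $I(U;Y)\le I(U;Z)$, which one expects to be the case precisely in the range where $Z$ is the locally more informative receiver.

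The main obstacle is the complementary case, where the $T_\alpha$‑optimal $U$ has $I(U;Y)>I(U;Z)$ although $I(X;Y)<I(X;Z)$ at $p(x)$ — a genuinely possible configuration when the curve $p\mapsto I(X;Y)$ lies below $p\mapsto I(X;Z)$ near $p(x)$ but is steeper there. Then the surplus $\bar\la\bigl(I(U;Y)-I(U;Z)\bigr)>0$ must be absorbed by the gap between $\sum_u\P(U=u)\max\{\alpha I(X;Y|U=u),I(X;Z|U=u)\}$ and $I(X;Z|U)+I(V;Z|U)$, and this is exactly the slack that soft information‑theoretic manipulations cannot see: the factor $\alpha>1$ on the $Y$‑term must be paid for by structural features of the optimizer, just as in the failure of \eqref{eq:eqg}. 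I therefore expect the argument to require the extremal‑distribution (KKT/perturbation) conditions on $p(u,v|x)$ developed in Section~\ref{se:hc} — in particular to rule out or tame the configuration in which $\alpha I(X;Y|U=u)\le I(X;Z|U=u)$ for every $u$ while $I(U;Y)>I(U;Z)$, which is the one that would break $(\star)$, and symmetrically for the $V$‑split. A natural first milestone is $\alpha$ near $1$: perturb off the identity $T(X)=\max\{I(X;Y),I(X;Z)\}$ that follows from Theorem~\ref{th:beq} and bound the first‑order term in $\alpha-1$ against a supporting hyperplane of the concave envelope. Whether the full range of $\alpha$ then follows from the two splittings sharpened by the extremal structure, or whether one must instead split $p(x)$ along a finer auxiliary that simultaneously refines $U$ and $V$, is the crux — and the reason the statement is posed only as a conjecture.
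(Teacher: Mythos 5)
The statement you were given is posed in the paper only as a conjecture (Conjecture~\ref{co:exbin}), supported by simulations and by the one fully worked special case of the BSSC (Appendix~A, where the stronger pointwise inequality \eqref{eq:eqg} is proved for that channel by a perturbation analysis of the AND configuration); the paper contains no proof of the conjecture itself, so there is no proof to compare yours against. Your proposal is, by your own admission, also not a proof, but the parts you do establish are correct: lower-bounding the concave envelope by the particular splitting of $p(x)$ induced by the $T_\alpha$-optimal $U$ is legitimate, and the algebra reducing the conjecture to your inequality $(\star)$ checks out; the regime $I(X;Y)\ge I(X;Z)$ is indeed settled pointwise, since Theorem~\ref{th:beq} together with $(\alpha-1)I(U;Y)\le(\alpha-1)I(X;Y)$ gives $T_\alpha(X)\le\max\{\alpha I(X;Y),I(X;Z)\}$ there; and when the optimizer satisfies $I(U;Y)\le I(U;Z)$, your chain $I(V;Z)-I(U;V)\le I(V;Z|U)\le I(X;Z|U)$ closes $(\star)$ (conditional data processing is valid because $V\to X\to Z$ holds given $U=u$).

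The genuine gap is exactly where you locate it: $I(X;Y)<I(X;Z)$ at $p(x)$ while the $T_\alpha$-optimal $U$ has $I(U;Y)>I(U;Z)$. This is not a removable technicality. The counterexample in Appendix~B (an AND-type optimizer, $\alpha\approx 3.43$) shows that the pointwise inequality \eqref{eq:eqg} genuinely fails in such configurations, so the surplus $\bar\la\bigl(I(U;Y)-I(U;Z)\bigr)$ cannot be absorbed by data-processing-style estimates alone; any proof must exploit extremal structure of $p(u,v|x)$ of the kind developed in Section~\ref{se:hc} (as the BSSC appendix does for that single channel), or a finer splitting than the one induced by $U$ or $V$. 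Your suggestion to perturb around $\alpha=1$ off the identity $T(X)=\max\{I(X;Y),I(X;Z)\}$ is also only a program, not an argument, since the envelope and the optimizer both move with $\alpha$ and no uniform control is offered. So your writeup should be read as correct partial progress plus an accurate diagnosis of why the statement remains open, not as a proof of it.
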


\begin{remark}
Clearly for a broadcast channel if equation \eqref{eq:eqg} holds then the conjecture holds. Even though we know that Equation \eqref{eq:eqg} may fail at some $p(x)$ for some channels,  the conjecture states that Equation \eqref{eq:eqg} holds for a sufficient class of $p(x)$ that is needed to compute the concave envelope.
\end{remark}

\section{Necessary conditions: beyond binary input alphabets}
\label{se:hc}
In this section we compute some properties of the extremal distributions for $T(X)$, $|\Xc| \geq 3$. To understand our approach, it is useful to have a quick recap of the proof of Equation \eqref{eq:eq1} for binary alphabets. The main idea behind the proof is to isolate the local maxima of the function $p(u,v|x)$ by a perturbation argument, an extension of the ideas introduced in \cite{goa09}. The following facts were established in \cite{goa09}: for a fixed broadcast channel $\qmf(y,z|x)$ to compute
$$ \max_{p(u,v|x)} I(U;Y) + I(V;Z) - I(U;V) $$ if suffices to consider
\begin{enumerate}
  \item $|\Uc|, |\Vc| \leq |\Xc|,$ and
 \item $p(x|u,v) \in \{0,1\}$, i.e. $X$ is a function of $(U,V)$, say $X=f(U,V)$.
\end{enumerate}
When $X$ is binary, there are 16 possible functions from $U,V$ to $X$. The proof \cite{nwg10} essentially boiled down to showing that the local maxima may only exist for the following two cases:
$U = X, V = \emptyset; ~ V=X, U=\emptyset$, leading to the terms $I(X;Y), I(X;Z)$ respectively. Indeed, in the proof, there were only two non-trivial cases to eliminate: these were (assume w.l.o.g. all alphabets of $\Uc, \Vc, \Xc$ are $\{0,1\}$):
$$ X = U \oplus V ~\mbox{(XOR case)}, \quad  X = U \wedge V ~\mbox{(AND case}). $$
Hence we adopt the approach of eliminating classes of functions where the local maxima may exist and we present the generalizations of the AND case and the XOR cases in the next two sections.

\smallskip

In the following sections we assume that $p(u,v|x)$ achieves $T(X)$ and $X=f(U,V)$. Further we assume that $\qmf(y,z|x) > 0 ~\forall x,y,z$, i.e. we are in a dense subset of channels with non-zero transition probabilities. In this case we can further assume that $p(u,v) > 0 ~\forall u,v$, \cite{gea10}.

\subsection{Generalization of the AND case}
In this section we deal with an extension of the AND case from the proof of the binary inequality~\cite{nwg10}. It says that one cannot have one column and one row mapped to the same input symbol.
\begin{theorem} \label{theoremIm1} For any $(U,V,X)$ such that $X=f(U,V)$ and $p(uv|x)$ achieves $T(X)$ one cannot find $x_0$, $u_0$ and $v_0$ such that $f(u_0,v)=f(u,v_0)=x_0$ for all $u\in \mathcal{U}$ and $v\in \mathcal{V}$.
\end{theorem}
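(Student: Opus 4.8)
The plan is a perturbation argument in the style of \cite{goa09,nwg10,gea10}. Throughout I use the section's standing reductions: $X=f(U,V)$ by \cite{goa09}, and, since the channel is assumed strictly positive, $p(u,v)>0$ for all $u,v$ by \cite{gea10}. Suppose toward a contradiction that $u_0,v_0,x_0$ exist with $f(u_0,v)=f(u,v_0)=x_0$ for all $u\in\Uc$, $v\in\Vc$. The observation that drives everything is that conditioned on $U=u_0$ one has $X=x_0$ almost surely, and likewise conditioned on $V=v_0$. Two consequences: (a) the whole ``cross'' $(\{u_0\}\times\Vc)\cup(\Uc\times\{v_0\})$ maps to $x_0$, so any redistribution of mass inside it leaves $p(x)$ unchanged; (b) a perturbation of $p(u,v)$ supported on row $u_0$ leaves every $p(U{=}u)$ and every $p(x|U{=}u)$ unchanged -- the latter because $p(x_0|U{=}u_0)\equiv1$ -- hence leaves $I(U;Y)$ unchanged, and symmetrically column-$v_0$ perturbations leave $I(V;Z)$ unchanged.

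The first step is to extract the first-order conditions. Write $J=I(U;Y)+I(V;Z)-I(U;V)$. Differentiating $J$ along the one-parameter family that shifts mass from the corner $(u_0,v_0)$ to $(u_0,v)$, and using (b), a short computation yields, for every $v\ne v_0$,
\[
\frac{p(u_0,v)}{p(V{=}v)}\,\exp\!\big(D(p(z|x_0)\,\|\,p(z|V{=}v))\big)=\frac{p(u_0,v_0)}{p(V{=}v_0)},
\]
together with the mirror identity in which $U\leftrightarrow V$ and $Y\leftrightarrow Z$ are swapped. Dropping the nonnegative divergence and summing over $v\ne v_0$ collapses this to the marginal bound
\[
p(U{=}u_0)\,p(V{=}v_0)\le p(u_0,v_0),
\]
strict unless every column maps entirely to $x_0$ (which would make $X$ constant). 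This is the general-alphabet version of the relation implicit in the binary argument of \cite{nwg10}.

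The second step is the second-order condition. Fix $v_1\ne v_0$ and $u_1\ne u_0$ and take the two-parameter feasible perturbation that moves mass $s$ from $(u_0,v_0)$ into $(u_0,v_1)$ and mass $t$ from $(u_0,v_0)$ into $(u_1,v_0)$. By (b), $I(U;Y)$ depends only on $t$ and $I(V;Z)$ only on $s$, so the $2\times2$ Hessian of $J$ has off-diagonal entry $-1/p(u_0,v_0)$ (it comes entirely from the $-p(u_0,v_0)\log p(u_0,v_0)$ term of $H(U,V)$) and strictly negative diagonal entries -- the $\chi^2$-type quantities that appear being dominated by the reciprocal of the mass moved. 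I would bound the diagonal entries from above using the first-order identities together with the elementary inequality $e^{D(p\|q)}\le1+\chi^2(p\|q)$, and then feed in $p(U{=}u_0)p(V{=}v_0)\le p(u_0,v_0)$ to conclude that the determinant of the Hessian is negative, contradicting negative semidefiniteness at the maximizer. In the binary case all the relevant marginals equal the local $2\times2$ block masses, the two displayed bounds become equalities, and one recovers the computation of \cite{nwg10}.

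The real work is the last step for general $|\Xc|$: once $|\Uc|$ or $|\Vc|$ exceeds $2$ the marginals $p(U{=}u_0),p(V{=}v_0)$ are not pinned down by the four cells entering the perturbation, so the crude diagonal bound does not by itself force a negative determinant -- one must play the full first-order data (constancy of the two functionals above, equivalently the strict form of the marginal bound) against the second-order bound, and doing this cleanly is the crux. A shortcut worth trying in parallel is to bypass the Hessian and improve $J$ directly by ``de-randomizing'' $U$ on $\{V=v_0\}$: replacing $p(U|V{=}v_0)$ by $p(U|V{\ne}v_0)$ keeps $I(V;Z)$ fixed and lowers $I(U;V)$ by exactly $I(U;\mathbf1[V{=}v_0])$, so it would suffice to show the accompanying change in $I(U;Y)$ is no larger than that; I would keep whichever of the two estimates closes more cleanly.
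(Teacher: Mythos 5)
Your setup is sound and your first-order computation is correct (the identity $\frac{p(u_0,v)}{p_v}\exp(D(p(z|x_0)\|p(z|V=v)))=\frac{p(u_0,v_0)}{p_{v_0}}$ and its consequence $p(U{=}u_0)p(V{=}v_0)\le p(u_0,v_0)$ are genuine consequences of stationarity, in the same spirit as the paper's Lemma~\ref{le:kld}, which is also extracted from the first derivative). But the proof is not complete, and the gap is exactly where you say it is: the second-order step. Your two-parameter perturbation (mass $s$ from $(u_0,v_0)$ to $(u_0,v_1)$, mass $t$ to $(u_1,v_0)$) only tests a $2\times 2$ minor of the Hessian, and as you concede, once $|\Uc|$ or $|\Vc|$ exceeds $2$ the marginals $p(U{=}u_0),p(V{=}v_0)$ entering your bound are not controlled by the four cells in play, so the determinant contradiction does not follow; ``play the full first-order data against the second-order bound'' is precisely the part that needs to be exhibited, not promised. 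The fallback de-randomization of $U$ on $\{V=v_0\}$ likewise hinges on an unproven comparison between the change in $I(U;Y)$ and $I(U;\mathbf{1}[V{=}v_0])$. A smaller issue: your strictness claim (``strict unless every column maps entirely to $x_0$'') is not what equality in the dropped divergences gives you; it gives $p(z|V{=}v)=p(z|x_0)$ for all $v$, i.e.\ $I(V;Z)=0$, which must be handled as a degenerate case rather than dismissed.

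The paper closes this gap differently: instead of a $2\times 2$ minor, it evaluates the second-order condition $\E(\E(L|U,Y)^2)+\E(\E(L|V,Z)^2)-\E(\E(L|U,V)^2)\le 0$ along a single direction supported on the \emph{entire} cross, with weights $I_{u_0,v}=p_{u_0,v}\,p_{v_0}$ for $v\neq v_0$, $I_{u,v_0}=-p_{u,v_0}\,p_{u_0}$ for $u\neq u_0$, and $I_{u_0,v_0}=p_{u_0v_0}(p_{v_0}-p_{u_0})$ (feasible because the whole cross maps to $x_0$). It then lower-bounds the channel-dependent quantities $T_{x_0,x_0,u},T_{x_0,x_0,v}$ via the first-derivative Lemma~\ref{le:kld} and verifies that, for this particular direction, the resulting lower bound on the right-hand side \emph{equals} the left-hand side. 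Hence every inequality used must be tight, and the equality case of Lemma~\ref{le:kld} forces $p_{y|x_0}=p_{y|u}$ and $p_{z|x_0}=p_{z|v}$ for all $u,v$, i.e.\ $I(U;Y)=I(V;Z)=0$, a contradiction. That exact-saturation trick, which converts the second-order test into an equality-case analysis valid for arbitrary alphabet sizes, is the idea missing from your argument; without it (or a worked-out substitute), the proposal remains an outline rather than a proof.
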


\begin{proof}
Assume otherwise that $f(u_0,v)=f(u,v_0)=x_0$ for all $u\in \mathcal{U}$ and $v\in \mathcal{V}$. Consider the multiplicative perturbation $q_{u,v,x}=p_{u,v,x} (1+\varepsilon L_{u,v})$ for some
$\varepsilon$ in some interval around zero. For this to be a valid perturbation, it has to preserve the marginal distribution of $X$. Therefore we require that
\begin{align}\sum_{u,v}p_{u,v,x}L_{u,v}=0~~~~\forall x,\label{EqI0}
\end{align}

We can view the expression $I(U;Y)+I(V;Z)-I(U;V)$ evaluated at $q_{u,v,x}$ as a function of $\varepsilon$. Non-positivity of the second derivative at a local maximum implies
\begin{align*}&\mathbb{E}(\mathbb{E}(L|U,Y)^2)+\mathbb{E}(\mathbb{E}(L|V,Z)^2)-\mathbb{E}(\mathbb{E}(L|U,V)^2)\leq 0.
\end{align*}
where random variable $L$ is defined to take the value $L_{u,v}$ under the event that $(U,V)=(u,v)$. Routine calculations show that this condition can be rewritten as follows
\begin{align}\label{eqnI:M0}
&\sum_{u,v}\frac{1}{p_{uv}}I_{u,v}^2-\sum_{u}\sum_{v_1}\sum_{v_2}T_{f(u,v_1),f(u,v_2),u}I_{u,v_1}I_{u,v_2}\\
& - \sum_{v}\sum_{u_1}\sum_{u_2}T_{f(u_1,v),f(u_2,v),v}I_{u_1,v}I_{u_2,v}\geq 0, \nonumber
\end{align}
where $I_{u,v}=p_{uv}L_{uv}$, $T_{x_1,x_2,u} = \sum_{y} p_{y|x_1}p_{y|x_2} \frac{1}{p_{uy}}$, and $T_{x_1,x_2,v}$ is defined similarly.
Equation \eqref{EqI0} can be rewritten as \begin{align}\label{EqI1}\sum_{u,v: x=f(u,v)}I_{u,v}=0~~~~\forall x.\end{align}

Now, let us define $I_{u,v}$ as follows: (a) $I_{u,v}=0$ when $u\neq u_0$ and $v\neq v_0$, (b) $I_{u_0,v}=p_{u_0,v}p_{v_0}$ when $v\neq v_0$,
(c) $I_{u,v_0}=-p_{u,v_0}p_{u_0}$ when $u\neq u_0$, and (d) $I_{u_0,v_0}=p_{u_0v_0}(p_{v_0}-p_{u_0}).$ Note that $I_{u_0,v}>0$ for all $v\neq v_0$, and $I_{u,v_0}<0$ for all $u\neq u_0$ since $p_{u,v}>0$.

It is easy to verify equation \eqref{EqI1} for this choice.
The second derivative constraint reduces (after some manipulation) to
\begin{align}\nonumber
& \sum_{\substack{u,v:~u=u_0\\or~v=v_0}}\frac{1}{p_{uv}}I_{u,v}^2 \geq\sum_{u:u\neq u_0} T_{x_0,x_0,u}I_{u,v_0}^2  \\
& \quad \qquad +\sum_{v:v\neq v_0} T_{x_0,x_0,v}I_{u_0,v}^2 +T_{x_0,x_0,u_0}(\sum_{v}I_{u_0,v})^2 \nonumber\\&\quad \qquad +T_{x_0,x_0,v_0}(\sum_{u}I_{u,v_0})^2. \label{eqnI:M1}
\end{align}

Now, using Lemma \ref{le:kld} (a very similar result was used in \cite{nwg10}) one can see that $T_{x_0,x_0,v}\geq \frac{p_{u_0v_0}}{p_{u_0v}p_{v_0}}$, $T_{x_0,x_0,v_0}\geq \frac{1}{p_{v_0}}$,  $T_{x_0,x_0,u}\geq \frac{p_{u_0v_0}}{p_{uv_0}p_{u_0}}$ and $T_{x_0,x_0,u_0}\geq \frac{1}{p_{u_0}}$. Hence, observe that
\begin{align}\nonumber&
\sum_{u:u\neq u_0} T_{x_0,x_0,u}I_{u,v_0}^2+T_{x_0,x_0,u_0}(\sum_{v}I_{u_0,v})^2\\
&\quad +\sum_{v:v\neq v_0} T_{x_0,x_0,v}I_{u_0,v}^2+T_{x_0,x_0,v_0}(\sum_{u}I_{u,v_0})^2\nonumber\\& \geq
\sum_{u:u\neq u_0} \frac{p_{u_0v_0}}{p_{uv_0}p_{u_0}}I_{u,v_0}^2+\frac{1}{p_{u_0}}(\sum_{v}I_{u_0,v})^2\nonumber\\
& \quad +\sum_{v:v\neq v_0}  \frac{p_{u_0v_0}}{p_{u_0v}p_{v_0}}I_{u_0,v}^2+\frac{1}{p_{v_0}}(\sum_{u}I_{u,v_0})^2.
\label{eqnI:M2}
\end{align}
One can verify that for our given choice of $I_{u,v}$ the right hand side of the equation \eqref{eqnI:M2} is equal to the left hand side of equation \eqref{eqnI:M1}, i.e.
\begin{align*}&\sum_{u,v:~u=u_0~or~v=v_0}\frac{1}{p_{uv}}I_{u,v}^2 = \frac{1}{p_{v_0}}(\sum_{u}I_{u,v_0})^2\\
& \qquad\qquad +\frac{1}{p_{u_0}}(\sum_{v}I_{u_0,v})^2 +\sum_{v:v\neq v_0}  \frac{p_{u_0v_0}}{p_{u_0v}p_{v_0}}I_{u_0,v}^2\\&\qquad\qquad +\sum_{u:u\neq u_0} \frac{p_{u_0v_0}}{p_{uv_0}p_{u_0}}I_{u,v_0}^2.
\end{align*}

This implies that both equations \eqref{eqnI:M2} and \eqref{eqnI:M1} have to hold with equality for our choice of $I_{u,v}$. Therefore all the inequalities that we took from Lemma \ref{le:kld} have to hold with equality. But this can happen {\em only if} $U$ is independent of $Y$, and $V$ is independent of $Z$, i.e. $I(U;Y)= I(V;Z)=0$. This is a contradiction and completes the proof.
\end{proof}

\begin{lemma} \label{le:kld2} If there are $u_1\neq u_2$ such that $f(u_1,v)=f(u_2,v)$ for all $v\in \mathcal{V}$, one can find another optimizer $p(u',v|x)$, where $I(U;Y)+I(V;Z)-I(U;V)=I(U';Y)+I(V;Z)-I(U';V)$ and furthermore $|\mathcal{U}'|<|\mathcal{U}|$. A similar condition holds if one can find $v_1\neq v_2$ such that $f(u,v_1)=f(u,v_2)$ for all $u\in \mathcal{U}$.
\end{lemma}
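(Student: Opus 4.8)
The plan is to exploit the fact that if two rows of $f$ coincide, then the symbol $X$ carries no information about which of $u_1,u_2$ occurred once $V$ is known, so merging $u_1$ and $u_2$ into a single symbol $u'$ can only help — or at worst preserve — the objective $I(U;Y)+I(V;Z)-I(U;V)$. Concretely, I would define $U'$ by the deterministic map that sends $u_1\mapsto u'$, $u_2\mapsto u'$, and fixes every other symbol; thus $U\to U'$ and the new alphabet has size $|\mathcal{U}|-1$. The joint law of $(U',V,X,Y,Z)$ is the pushforward of the optimizer, and since $X$ was already a function of $(U,V)$ with $f(u_1,\cdot)=f(u_2,\cdot)$, the map $(U',V)\mapsto X$ remains well-defined: $X=f'(U',V)$.

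The heart of the argument is a three-term comparison. Since $U\to U'\to (\text{nothing})$ is not quite what we need — rather $U'$ is a function of $U$ — we have $I(U';Y)\le I(U;Y)$ and $I(U';V)\le I(U;V)$ by the data-processing inequality, while $I(V;Z)$ is untouched. So the difference in the objective is
\begin{align*}
&\big(I(U;Y)+I(V;Z)-I(U;V)\big) - \big(I(U';Y)+I(V;Z)-I(U';V)\big)\\
&\quad = \big(I(U;Y)-I(U';Y)\big) - \big(I(U;V)-I(U';V)\big)\\
&\quad = I(U;Y|U') - I(U;V|U'),
\end{align*}
using the chain rule together with $I(U';Y|U)=I(U';V|U)=0$. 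Now $I(U;Y|U')$ is supported on the single event $\{U'=u'\}=\{U\in\{u_1,u_2\}\}$, and conditioned on that event the pair $(U,V)$ determines $X$ via $f$, with $f(u_1,v)=f(u_2,v)$; hence conditioned on $U'=u'$ and on $V$, the variable $X$ — and therefore $Y$ — is independent of $U$. This gives $I(U;Y|U',V)=0$, so $I(U;Y|U')\le I(U;Y|U',V)+I(U;V|U')=I(U;V|U')$, i.e. the displayed difference is $\le 0$. Combined with the data-processing direction, the objective is unchanged, so $p(u',v|x)$ is again an optimizer, with strictly smaller $\mathcal{U}'$. The statement for coinciding columns $f(u,v_1)=f(u,v_2)$ follows by the symmetric argument, exchanging the roles of $(U,Y)$ and $(V,Z)$.

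The main obstacle is getting the conditional-independence bookkeeping exactly right: one must be careful that $I(U;Y|U')$ decomposes as claimed (which hinges on $Y-X-(U,V)-U'$ being a Markov chain, inherited from $(U,V)\to X\to(Y,Z)$), and that the equality $f(u_1,v)=f(u_2,v)$ for \emph{all} $v$ is precisely what forces $X\perp U \mid (U'=u',V)$. A secondary point worth checking is that after merging we still have $p(u',v)>0$ for the relevant symbols, but since we only need \emph{an} optimizer of smaller cardinality, positivity is not essential to the conclusion; the cardinality drop and preservation of the objective value are all that Lemma \ref{le:kld2} asserts.
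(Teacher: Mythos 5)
Your argument is essentially the paper's proof: merge $u_1,u_2$ into a single symbol $U'$, note that $H(X|U',V)=0$, and use the Markov chain $(U,V)\to X\to (Y,Z)$ to conclude $I(U;Y|U')\le I(U;V|U')$ (the paper phrases this same step as $I(U;V|U')=I(U;VX|U')=I(U;VXY|U')\ge I(U;Y|U')$), so the merged variable does at least as well as $U$. One correction to your last step: the reverse direction, that the objective does not strictly increase, does \emph{not} follow from data processing --- DPI lowers both $I(U';Y)$ and $I(U';V)$, which push the objective in opposite directions --- but it is immediate from the standing assumption that $p(u,v|x)$ achieves $T(X)$, so the merged distribution attains the same maximal value and is again an optimizer with $|\mathcal{U}'|<|\mathcal{U}|$.
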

\begin{remark}
This lemma shows that to compute $T(X)$ one only needs to consider functions $f(U,V)$ where each row (fixed $U$) has a distinct mapping; similarly for columns.
\end{remark}
\begin{proof} Assume that $\mathcal{U}=\{u_1, u_2, ..., u_k\}$. Define $U'$ as a random variable taking values in $\{2,3,...,k\}$ as follows: $U'=i$ if $U=u_i$ for $i\geq 3$, and $U'=2$ if $U=u_1$ or $U=u_2$. Note that $H(X|U'V)=0$ since $f(u_1,v)=f(u_2,v)$ for all $v\in \mathcal{V}$. It suffices to prove that
\vspace*{-0.05in}$$I(U;Y)+I(V;Z)-I(U;V)\leq I(U';Y)+I(V;Z)-I(U';V).$$
This is equivalent to showing that $I(U;V|U')\geq I(U;Y|U')$. Since $H(X|U'V)=0$, we have
\vspace*{-0.05in}$$I(U;V|U')=I(U;VX|U')=I(U;VXY|U')\geq I(U;Y|U').$$ \
This completes the proof.\end{proof}

\begin{lemma}\label{le:kld} Take arbitrary $u_1,u_2,v,x$ such that $f(u_1,v)=x$ and
$f(u_2,v)=x$. Then any maximizing distribution must satisfy
\begin{align*}
&\sum_{y}\frac{p_{y|x}^2}{p_{u_2y}}\geq \frac{p_{u_1v}}{p_{u_2v}p_{u_1}},\\
\end{align*}
Equality implies that $p_{y|x}=p_{y|u_2}=p_{y|u_1}$ for all $y$.
\end{lemma}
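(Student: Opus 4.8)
Here is the route I would take.

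The plan is to read the inequality off the \emph{first-order} stationarity condition of the optimization defining $T(X)$, rather than off the second-order condition used in the proof of Theorem~\ref{theoremIm1}. Since $f(u_1,v)=f(u_2,v)=x$, I would take the one-parameter family of multiplicative perturbations $q_{u,v,x}=p_{u,v,x}(1+\varepsilon L_{u,v})$ supported on just these two cells, i.e.\ $I_{u_1,v}=-1$, $I_{u_2,v}=+1$ and $I_{u,v}=0$ otherwise (with $I_{u,v}=p_{uv}L_{uv}$). Because both cells lie in the fiber $f^{-1}(x)$, this perturbation preserves the marginal of $X$, so \eqref{EqI1} holds; and because $p_{uv}>0$ everywhere it is a genuine two-sided feasible direction. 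Hence optimality of $p(u,v|x)$ forces $\frac{d}{d\varepsilon}\big[I(U;Y)+I(V;Z)-I(U;V)\big]=0$ at $\varepsilon=0$ along this direction.

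Next I would compute that derivative. A routine calculation (the perturbation method of \cite{goa09}) gives that cell $(u,v)$ contributes $I_{u,v}\sum_y p_{y|f(u,v)}\log\frac{p_{y|u}}{p_y}$ to $\frac{d}{d\varepsilon}I(U;Y)$, contributes $I_{u,v}\sum_z p_{z|f(u,v)}\log\frac{p_{z|v}}{p_z}$ to $\frac{d}{d\varepsilon}I(V;Z)$, and contributes $I_{u,v}\log\frac{p_{uv}}{p_up_v}$ to $\frac{d}{d\varepsilon}I(U;V)$. For the chosen perturbation the two $Z$-side contributions coincide (same column $v$, same image $x$) and cancel, and so does the $\log p_v$ part; what survives is the identity
\begin{equation*}
\sum_y p_{y|x}\log\frac{p_{y|u_1}}{p_{y|u_2}}=\log\frac{p_{u_1v}\,p_{u_2}}{p_{u_2v}\,p_{u_1}}.
\end{equation*}

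Finally I would bound the left-hand side. Writing it as $-D\big(p_{Y|x}\,\|\,p_{Y|u_1}\big)+D\big(p_{Y|x}\,\|\,p_{Y|u_2}\big)$, the first KL term is $\le 0$ and may be dropped, while Jensen's inequality gives $D\big(p_{Y|x}\,\|\,p_{Y|u_2}\big)=\sum_y p_{y|x}\log\frac{p_{y|x}}{p_{y|u_2}}\le\log\sum_y\frac{p_{y|x}^2}{p_{y|u_2}}$. Therefore $\log\frac{p_{u_1v}p_{u_2}}{p_{u_2v}p_{u_1}}\le\log\sum_y\frac{p_{y|x}^2}{p_{y|u_2}}$; exponentiating and using $p_{u_2y}=p_{u_2}p_{y|u_2}$ yields $\sum_y\frac{p_{y|x}^2}{p_{u_2y}}\ge\frac{p_{u_1v}}{p_{u_2v}p_{u_1}}$. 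Equality forces both $D\big(p_{Y|x}\,\|\,p_{Y|u_1}\big)=0$ and equality in Jensen, i.e.\ $p_{Y|x}=p_{Y|u_1}$ and $p_{y|x}/p_{y|u_2}$ constant (hence identically $1$, since both are probability vectors), so $p_{y|x}=p_{y|u_1}=p_{y|u_2}$ for all $y$, as stated. If $u_1=u_2$ the stationarity identity is vacuous, but then the claim reduces to $\sum_y p_{y|x}^2/p_{u_2y}\ge 1/p_{u_2}$, which is just Cauchy--Schwarz.

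The step needing the most care is the first-derivative computation in the middle paragraph — keeping all signs straight when the $Z$-side and $\log p_v$ terms cancel. The conceptual crux, and the thing I would flag as the real content, is that it is first-order stationarity on the mass-transfer direction inside $f^{-1}(x)$ — not the second-order inequality used elsewhere in the paper — that produces the asymmetric ratio $p_{u_1v}/(p_{u_2v}p_{u_1})$; once that identity is in hand, the estimate is a one-line KL/Jensen manipulation, and I do not anticipate any genuine obstacle beyond the bookkeeping.
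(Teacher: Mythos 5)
Your proposal is correct and takes essentially the same route as the paper: the paper's proof of this lemma also starts from the first-derivative condition along the mass transfer between the cells $(u_1,v)$ and $(u_2,v)$ (written there as the one-sided inequality $\log\frac{p_{u_1v}}{p_{u_2v}}\leq\sum_y p_{y|x}\log\frac{p_{u_1y}}{p_{u_2y}}$), then drops $D(p_{y|x}\|p_{y|u_1})$ and applies Jensen, exactly as you do. The only difference is that you make the perturbation computation explicit and state the stationarity condition as an equality (justified by two-sided feasibility since $p(u,v)>0$), whereas the paper simply invokes the first-derivative condition, which is all that is needed for the inequality and its equality case.
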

\begin{proof} We start with the first derivative condition to write
\begin{align*}
\log \frac{p_{u_1v}}{p_{u_2v}} &\leq \sum_{y} p_{y|x} \log
\frac{p_{u_1y}}{p_{u_2y}} + \sum_{z} p_{z|x} \log \frac{p_{vz}}{p_{vz}}\\&=\sum_{y} p_{y|x} \log
\frac{p_{u_1y}}{p_{u_2y}}\\&=\sum_{y} p_{y|x} \log
\frac{p_{u_1}p_{y|u_1}}{p_{u_2y}}\\&=\sum_{y} p_{y|x} \log
\frac{p_{u_1}p_{y|x}}{p_{u_2y}}-\sum_{y} p_{y|x} \log
\frac{p_{y|x}}{p_{y|u_1}}\\&=\sum_{y} p_{y|x} \log
\frac{p_{u_1}p_{y|x}}{p_{u_2y}}-D(p_{y|x}\|p_{y|u_1})\\&\leq
\sum_{y} p_{y|x} \log
\frac{p_{u_1}p_{y|x}}{p_{u_2y}}\\&\leq
\log \sum_{y} p_{y|x}
\frac{p_{u_1}p_{y|x}}{p_{u_2y}}.
\end{align*}
\end{proof}

\subsection{An alternate proof for the XOR case}
\label{sse:xor}
In this section we provide an alternative proof for the binary XOR case, and its  generalization to the non-binary case (another extension of the XOR case has been provided in \cite{gea10}). Let us begin with the binary XOR case. Let $U, V$ be binary random variables, and $X=U\oplus V$. We would like to show that under this setting, we have
$$I(U;Y)+I(V;Z)\leq \max(I(X;Y), I(X;Z)).$$

\begin{definition} Given $p(u,x)$, let $c_{p(u,x)}$ denote the minimum value of $c$ such that
$I(U;Y)\leq c\cdot I(X;Y)$ holds for \emph{all} $p(y|x)$ for \emph{all} possible alphabets $\mathcal{Y}$. Alternatively, $c_{p(u,x)}$ is the minimum value of $c$ such that
the function $q(x) \mapsto H(U)-cH(X)$ when $p(u|x)$ is fixed, matches its convex envelope at $p(x)$.
\end{definition}
By the data-processing inequality we know that $0 \leq c_{p(u,x)} \leq 1$, and the minimum is well defined.
\begin{remark}
  Note that here we are adopting a dual notion. We fix the auxiliary channel $p(u|x)$ and then ask for a minimizing $c$ over all the forward channels.
\end{remark}
If $c_{p(u,x)}+c_{p(v,x)}\leq 1$ then note: $I(U;Y)+I(V;Z)\leq  c_{p(u,x)}I(X;Y)+c_{p(v,x)}I(X;Z) \leq \max(I(X;Y), I(X;Z))$.

\begin{theorem} \label{theoremEXOR} For any binary $U, V, X$ and $p(u,v,x)$ such that $X=U\oplus V$ the following inequality holds: $c_{p(u,x)}+c_{p(v,x)}\leq 1$.
\end{theorem}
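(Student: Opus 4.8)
The plan is to pass to the scalar, supporting‑line description of $c_{p(u,x)}$ and to reduce the claim to a single concavity statement together with finitely many elementary inequalities.

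\emph{Reduction.} Hold the reverse channel $p(u\mid x)$ fixed and let $q:=\P(X{=}1)$ be the free parameter, and write $r:=\P(U{=}1\mid X{=}0)$, $s:=\P(U{=}1\mid X{=}1)$. Then $H(X)=H_2(q)$ and $H(U)=H_2\!\big(r+(s-r)q\big)=:H_U(q)$, with $H_2$ the binary entropy function. Since $H_2$ is strictly concave, ``$q\mapsto H(U)-cH(X)$ meets its lower convex envelope at $q$'' is equivalent to the existence of a supporting line at $q$; matching slopes forces that line, and rearranging gives
$$c_{p(u,x)}=\sup_{q'\in[0,1]}\frac{\Phi_U(q')}{\Psi(q')},\qquad \Phi_U(q'):=\widehat H_U(q')-H_U(q'),\qquad \Psi(q'):=\widehat H_2(q')-H_2(q'),$$
where $\widehat H_U,\widehat H_2$ are the tangent lines of $H_U,H_2$ at $q$ (the value at $q'=q$ read as the limit $|H_U''(q)|/|H_2''(q)|$); here $\Phi_U,\Psi\ge0$ and $\Psi(q')>0$ for $q'\ne q$. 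The XOR hypothesis enters through exactly one identity: $X{=}0$ forces $U{=}V$ and $X{=}1$ forces $U\ne V$, so $\P(V{=}1\mid X{=}0)=r$ and $\P(V{=}1\mid X{=}1)=1-s$, whence $H(V)=H_2\!\big(r+(1-s-r)q\big)=:H_V(q)$.

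\emph{Key Lemma and the diagonal bound.} The core is that $\Delta(q'):=H_2(q')-H_U(q')-H_V(q')$ is concave on $[0,1]$. Using $H_2''(t)=-\,1/\big(t(1-t)\ln2\big)$, the inequality $\Delta''\le0$ unwinds to
$$\frac{1}{q'(1-q')}\ \ge\ \frac{(s-r)^2}{w_U(1-w_U)}+\frac{(1-s-r)^2}{w_V(1-w_V)},\qquad w_U=r+(s-r)q',\quad w_V=r+(1-s-r)q',$$
and the variance decompositions $w_U(1-w_U)=E+q'(1-q')(s-r)^2$, $w_V(1-w_V)=E+q'(1-q')(1-s-r)^2$ — with $E:=(1-q')\,r(1-r)+q'\,s(1-s)$ \emph{the same} in both — collapse this to the elementary bound $E\ge q'(1-q')\,|s-r|\,|1-s-r|$, which follows from $0\le r,s\le1$ by a short case split on the sign of $(s-r)(1-s-r)$. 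Granting the Lemma, $\Delta$ lies below its tangent at $q$, i.e. $\Phi_U(q')+\Phi_V(q')\le\Psi(q')$ for every $q'$; dividing by $\Psi(q')$ gives
$$\frac{\Phi_U(q')}{\Psi(q')}+\frac{\Phi_V(q')}{\Psi(q')}\le1\qquad\text{for all }q'\in[0,1].$$
In particular $c_{p(u,x)}+c_{p(v,x)}\le1$ as soon as the two suprema are attained at a common $q'$.

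\emph{Finishing, and the main obstacle.} What remains is to drop the ``common $q'$'' hypothesis. I would first show that the supremum defining $c_{p(u,x)}$ is always attained at $q'\in\{0,q,1\}$ — equivalently that $q'\mapsto\Phi_U(q')/\Psi(q')$ is unimodal on $[0,q]$ and on $[q,1]$, which is again a one‑variable property of $H_2$ (the function $H_U-c\,H_2$ has at most one concave ``bump''). This gives $c_{p(u,x)}=\max\{A_u,B_u,C_u\}$ with $A_u=D\big(\mathrm{Bern}(r)\,\|\,p_U\big)/\big(\!-\!\log\P(X{=}0)\big)$, $C_u=D\big(\mathrm{Bern}(s)\,\|\,p_U\big)/\big(\!-\!\log\P(X{=}1)\big)$, $B_u=|H_U''(q)|/|H_2''(q)|$, and similarly for $v$ (with $s\mapsto1-s$). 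The claim then reduces to finitely many scalar inequalities $i_u+j_v\le1$; the three diagonal ones are exactly the Key Lemma at $q'=0,q,1$, and by the symmetries $U\!\leftrightarrow\!V$ ($s\leftrightarrow1-s$) and $X$‑relabelling ($q\leftrightarrow1-q$, $r\leftrightarrow s$) only two off‑diagonal inequalities survive, say $A_u+B_v\le1$ and $A_u+C_v\le1$. These I expect to be the crux: they mix the two normalizations $-\log\P(X{=}0)$ and $-\log\P(X{=}1)$, so they are not consequences of the concavity of $\Delta$, and they are sharp — as $\P(X{=}1)\to0$ and $r\to0$ one has $A_u\to\P(U{=}1\mid X{=}1)=s$ and $C_v\to\P(V{=}1\mid X{=}1)=1-s$, so $A_u+C_v\to1$, encoding the complementarity of $U$ and $V$ given $X{=}1$. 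Proving these two inequalities outright — e.g. from the explicit divergence formulas, or by showing that $A_u>B_u$ forces $A_v>B_v$ so that a diagonal bound can always be substituted — is the hard part.
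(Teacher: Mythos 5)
Your setup and your Key Lemma are both correct: the tangent-line characterization $c_{p(u,x)}=\sup_{q'}\Phi_U(q')/\Psi(q')$ is a valid restatement of the envelope definition, and with $r=\P(U{=}1|X{=}0)$, $s=\P(U{=}1|X{=}1)$ the function $q'\mapsto H(X)-H(U)-H(V)$ is indeed concave — your total-variance decomposition correctly reduces this to $E\ge q'(1-q')\,|s-r|\,|1-s-r|$, which holds. The genuine gap is everything after that. Concavity of $\Delta$, anchored at the tangent point $q$, yields only the pointwise inequality $\Phi_U(q')/\Psi(q')+\Phi_V(q')/\Psi(q')\le1$, and a pointwise bound on a sum does not bound the sum of two suprema that may be attained at different $q'$. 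Your proposed repair rests on two unproven ingredients: (i) that each supremum is attained at $q'\in\{0,q,1\}$ — this is asserted, not shown, and it is not obvious, since the critical $q'$ is an extra tangency point of $H(U)-cH(X)$ with the tangent line at $q$ and can a priori lie in the interior away from $q$; and (ii) the off-diagonal inequalities $A_u+B_v\le1$ and $A_u+C_v\le1$, which you yourself flag as the hard part and leave open. As written, the argument therefore does not prove the theorem.

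The paper closes exactly this gap by decoupling the two coefficients instead of proving one joint concavity statement tied to the given $p(x)$. With $\alpha=\P(U{=}0|X{=}0)=1-r$ and $\beta=\P(U{=}0|X{=}1)=1-s$, it shows that $q(x)\mapsto H(U)-|\alpha-\beta|H(X)$ and $q(x)\mapsto H(V)-|\alpha+\beta-1|H(X)$ are \emph{globally} convex; the required second-derivative bound, $w_U(1-w_U)\ge|\alpha-\beta|\,q'(1-q')$, is an elementary estimate of the same flavor as (and slightly stronger than) your case split, proved in two lines. A globally convex function coincides with its convex envelope at every $p(x)$, so $c_{p(u,x)}\le|s-r|$ and $c_{p(v,x)}\le|1-r-s|$ hold simultaneously for all $p(x)$, no common maximizer is needed, and the theorem follows from $|s-r|+|1-r-s|\le1$. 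Note the relation between the two routes: summing the paper's two separate convexity statements (and adding the convex function $-(1-|s-r|-|1-r-s|)H(X)$) recovers your $\Delta$-concavity lemma, but the converse direction — splitting the joint statement into two individually bounded pieces — is precisely what your argument is missing. If you replace the single anchored lemma by these two $p(x)$-independent bounds, your computation closes into a complete proof.
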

\begin{proof}  Let $p_{ij}=p(U=i,V=j)$ for $i,j\in\{0,1\}$. Let $\alpha:=\frac{p_{00}}{p_{00}+p_{11}}=\frac{p_{00}}{p(X=0)}$ and $\beta:=\frac{p_{01}}{p_{01}+p_{10}}=\frac{p_{01}}{p(X=1)}$. Then we claim that $c_{p(u,x)}\leq |\alpha-\beta|$ and $c_{p(v,x)}\leq |\alpha+\beta-1|$. This will complete the proof since $\alpha, \beta \in [0,1]$ implies
\begin{align*}& |\alpha-\beta|+|\alpha+\beta-1| \leq  1.\end{align*}
 To show that $c_{p(u,x)}\leq |\alpha-\beta|$, it suffices to show that $q(x) \mapsto H(U)-|\alpha-\beta| H(X)$ is convex at all $q(x)$. The proof for $c_{p(v,x)}\leq |\alpha+\beta-1|$ is similar. Note that $ H(U)-|\alpha-\beta| H(X)=h(\alpha q(0)+\beta q(1))-|\alpha-\beta|h(q(0))$ where $h(\cdot)$ is the binary entropy function. Thus, we need to look at the function $x \mapsto h(\alpha x+\beta (1-x))-|\alpha-\beta|h(x)$ for $x\in [0,1]$. The second derivative is $$-\frac{(\alpha-\beta)^2}{(\alpha x+\beta(1-x))(1-(\alpha x+\beta(1-x)))}+\frac{|\alpha-\beta|}{x(1-x)}.$$ We need to verify that the above expression is non-negative, i.e.
\begin{align*}&(\alpha x+\beta(1-x))(1-(\alpha x+\beta(1-x)))\geq |\alpha-\beta|x(1-x).\end{align*}
This is true because
\begin{align*}&(\alpha x+\beta(1-x))(1-(\alpha x+\beta(1-x)))\\&
=(\alpha x+\beta(1-x))(x(1-\alpha)+(1-x)(1-\beta))\\& \geq  \alpha x(1-x)(1-\beta)+\beta(1-x)x(1-\alpha)
\\&
=x(1-x)[\alpha (1-\beta)+\beta(1-\alpha)]
 \\& \geq
x(1-x)|\alpha (1-\beta)-\beta(1-\alpha)|\\&
=x(1-x)|\alpha -\beta|.\end{align*}
\end{proof}
\begin{remark} Note that the definition of $c_{p(u,x)}$ requires the constraint $I(U;Y)\leq c\cdot I(X;Y)$ to hold for \emph{all} channels $p(y|x)$. If the subchannel $p(y|x)$ is known to be an erasure channel (i.e. $Y$ is equal to $X$ with some probability and erased otherwise), we can get even smaller values for $c$ (here $\frac{I(U;X)}{H(X)}$).
\end{remark}
In the Appendix C, we give a geometric interpretation to above, which yields insights for higher cardinality alphabets.

\section{conclusion}
We propose a pathway for verifying the optimality of Marton\rq{}s inner bound by trying to determine properties of the extremal distributions. We establish some necessary conditions, extending the work in the binary input case. We also add to the set of sufficient conditions. We present a few conjectures whose verifications have immediate consequences for the optimality of Marton's region.

\section*{Acknowledgments}

The work of Amin Gohari  was partially supported by the Iranian National Science Foundation Grant 89003743.

The work of Chandra Nair was partially supported by the following grants from the University Grants Committee of the Hong Kong Special Administrative Region, China: a) (Project No. AoE/E-02/08), b) GRF Project 415810. He also acknowledges the support from the Institute of Theoretical Computer Science and Communications (ITCSC) at the Chinese University of Hong Kong.

Venkat Anantharam gratefully acknowledges Research support from the ARO MURI grant W911NF-
08-1-0233, Tools for the Analysis and Design of Complex
Multi-Scale Networks, from the NSF grant CNS-
0910702, from the NSF Science \& Technology Center
grant CCF-0939370, Science of Information, from Marvell
Semiconductor Inc., and from the U.C. Discovery
program.

\bibliographystyle{IEEEtran}

\appendix

\subsection{Proof of an inequality for BSSC}

We consider the binary skew-symmetric broadcast channel with $p=\frac 12$ shown in Figure \ref{fig:bsscbc}. For this channel we prove that for all $\alpha \geq 1$ and for all $(U,V) \to X \to (Y,Z)$ we have
$$ \alpha I(U;Y) + I(V;Z) - I(U;V) \leq \max\{\alpha I(X;Y), I(X;Z)\}.$$

The proof of this claim again uses the perturbation method. Using the same arguments as in \cite{nwg10} it is easy to deduce that the AND case is the only non-trivial case. (The impossibility of an XOR mapping being a local maximum carries over
for all binary input broadcast channels in this setting.)

\begin{claim}
Any $p(u,v)>0$ such that $X=U \wedge V$ cannot  maximize (for $\a \geq 1$)
$$ \a I(U;Y) + I(V;Z) - I(U;V), $$
for the BSSC channel for a fixed $p(x)$.
\end{claim}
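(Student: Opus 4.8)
Following the proof of Theorem~\ref{theoremIm1}, the plan is to run a multiplicative-perturbation argument in the binary case $\Uc=\Vc=\{0,1\}$ with $X=U\wedge V$ (so $X{=}1$ exactly when $U{=}V{=}1$), but for the weighted functional $\alpha I(U;Y)+I(V;Z)-I(U;V)$, $\alpha\ge 1$, and to close it using the explicit transition law of the BSSC at $p=\tfrac12$: $p(y\,|\,X{=}0)=(1,0)$, $p(y\,|\,X{=}1)=(\tfrac12,\tfrac12)$, and symmetrically $p(z\,|\,X{=}0)=(\tfrac12,\tfrac12)$, $p(z\,|\,X{=}1)=(0,1)$. (By the second item after \eqref{eq:eqg}, only the regime $I(X;Z)>I(X;Y)$ is actually needed for \eqref{eq:eqg} itself, and this is where the estimates below are tightest, though the argument imposes no such restriction.) Suppose, for contradiction, that some $p(u,v)>0$ with $X=U\wedge V$ maximizes $\alpha I(U;Y)+I(V;Z)-I(U;V)$ at the prescribed $p(x)$; then it is in particular a local maximum over the two-dimensional slice $\{(p_{0,0},p_{0,1},p_{1,0})\ge 0:\ p_{0,0}+p_{0,1}+p_{1,0}=1-p_{1,1}\}$, $p_{1,1}$ fixed.

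First I would record the optimality conditions for a mass-preserving multiplicative perturbation $q_{u,v}=p_{u,v}(1+\varepsilon L_{u,v})$, $I_{u,v}:=p_{u,v}L_{u,v}$: it is admissible iff $I_{1,1}=0$ (the $\wedge$-preimage of $1$ is the single cell $(1,1)$) and $I_{0,0}+I_{0,1}+I_{1,0}=0$; the latter also forces $\E(L\,|\,Y)=\E(L\,|\,Z)=0$. The first-order condition applied to the two mass-swaps that remain inside the preimage $\{X{=}0\}$ gives two stationarity identities --- the $U$-swap one equates a log-likelihood ratio of $p(v\,|\,u)$ to $\alpha$ times a $Y$-divergence expression, the $V$-swap one equates a ratio of $p(u\,|\,v)$ to a $Z$-divergence expression and carries no $\alpha$ (a $V$-swap leaves $I(U;Y)$ fixed) --- and, pushing the first identity through the same $D(\cdot\|\cdot)\ge 0$ and Jensen steps as in Lemma~\ref{le:kld}, a weighted analogue of that lemma, such as
\[
\Big(\sum_y p(y\,|\,X{=}0)^2/p_{1,y}\Big)^{\!\alpha}\ \ge\ \frac{p_{0,0}}{\P(U{=}0)\,p_{1,0}}\,\P(U{=}1)^{1-\alpha},
\]
whose equality case forces $p(y\,|\,X{=}0)=p(y\,|\,U{=}0)=p(y\,|\,U{=}1)$. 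Non-positivity of the second derivative, after using $\E(L\,|\,Y)=\E(L\,|\,Z)=0$, reduces to
\begin{align*}
&(1-\alpha)\,\E\big(\E(L\,|\,U)^2\big)+\alpha\,\E\big(\E(L\,|\,U,Y)^2\big)\\
&\qquad{}+\E\big(\E(L\,|\,V,Z)^2\big)-\E\big(\E(L\,|\,U,V)^2\big)\ \le\ 0 .
\end{align*}

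Next I would specialize to the BSSC and insert the perturbation. Since $p(y\,|\,X{=}0)$ is deterministic, $T_{0,0,u}=\sum_y p(y\,|\,X{=}0)^2/p_{u,y}=1/p_{u,Y{=}0}$ is explicit ($p_{0,Y{=}0}=\P(U{=}0)$, as $X{=}1\Rightarrow U{=}1$, and $p_{1,Y{=}0}=p_{1,0}+\tfrac12 p_{1,1}$), with analogous forms on the $Z$-side, and the $D(p(\cdot\,|\,X{=}0)\,\|\,\cdot)$ terms collapse to single logarithms. I would then plug in the explicit perturbation direction from the proof of Theorem~\ref{theoremIm1} --- $I_{0,1}=p_{0,1}\P(V{=}0)$, $I_{1,0}=-p_{1,0}\P(U{=}0)$, $I_{0,0}=p_{0,0}\big(\P(V{=}0)-\P(U{=}0)\big)$, $I_{1,1}=0$, which one checks is admissible --- compute the four quadratic forms above in closed form, and compare them against the bounds from the weighted lemma. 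The target is the dichotomy of Theorem~\ref{theoremIm1}: either the second-order inequality fails (contradicting local maximality), or every inequality invoked must hold with equality, whence $p(y\,|\,X{=}0)=p(y\,|\,U{=}0)=p(y\,|\,U{=}1)$ and $p(z\,|\,X{=}0)=p(z\,|\,V{=}0)=p(z\,|\,V{=}1)$, i.e.\ $I(U;Y)=I(V;Z)=0$. Then the value at our point is $-I(U;V)\le 0$, while the feasible boundary choice $V\equiv 1$ (so $X=U$), which realizes the same $p(x)$, attains $\alpha I(X;Y)>0$ --- strictly positive because $p(u,v)>0$ forces $\P(X{=}1)\in(0,1)$ --- contradicting that our point is a maximizer. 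Either branch completes the proof.

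The main obstacle is this final comparison. The weighting $\alpha>1$ is asymmetric: it boosts the helpful term $\E(\E(L\,|\,U,Y)^2)$ by a factor $\alpha$, but it also adds $(1-\alpha)\,\E(\E(L\,|\,U)^2)\le 0$, which \emph{weakens} the second-order inequality, and it turns the Lemma~\ref{le:kld} bound from an arithmetic- into a geometric-mean-type estimate. Consequently the exact, term-by-term cancellation that lets the $\alpha=1$ argument close --- the left side of \eqref{eqnI:M1} matching the bound in \eqref{eqnI:M2} --- no longer occurs for free, and one has to show the residual slack has the correct sign using the specific BSSC numbers; the heuristic is that, $p(y\,|\,X{=}0)$ being deterministic, the $Y$-side divergences are large enough that carrying $\alpha$ there wins on balance precisely when $I(X;Z)>I(X;Y)$. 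Verifying this inequality --- that the weighted second-order condition still collapses to the degenerate case $I(U;Y)=I(V;Z)=0$ for the BSSC at $p=\tfrac12$ --- is, I expect, the crux; the remainder parallels the proof of Theorem~\ref{theoremIm1}.
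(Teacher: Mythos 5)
There is a genuine gap: your plan reaches the decisive step and then stops. You correctly set up the admissible multiplicative perturbations ($I_{1,1}=0$, $\sum I_{u,v}=0$) and the weighted second-order condition, but the whole argument hinges on the claim that, after inserting the single perturbation direction borrowed from Theorem~\ref{theoremIm1} and a ``weighted analogue'' of Lemma~\ref{le:kld}, one still gets the dichotomy ``second-order condition fails, or all bounds are tight and $I(U;Y)=I(V;Z)=0$.'' You yourself note that the exact cancellation between \eqref{eqnI:M1} and \eqref{eqnI:M2} which closes the $\alpha=1$ case no longer occurs, and you leave the residual-sign verification as ``the crux.'' That is precisely the part that cannot be waved through: the counterexample in Appendix~B shows that for $\alpha>1$ an AND configuration \emph{can} be the (even global) maximizer of $\alpha I(U;Y)+I(V;Z)-I(U;V)$ for another binary-input channel, so no argument of the Lemma~\ref{le:kld}-plus-one-direction type that uses the BSSC only through a qualitative heuristic (``the $Y$-side divergences are large enough'') can succeed; the BSSC transition probabilities must enter quantitatively.

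The paper's proof closes this differently. It uses the first-order stationarity conditions as \emph{equalities}, which, because the BSSC rows are partly deterministic ($\qmf(y|0)=\tfrac12$, $\qmf(y|1)=\ind\{y=1\}$, and symmetrically for $Z$), collapse to two explicit algebraic identities: $\tfrac12 p_{uv}(11)p_{uv}(00)=p_{uv}(10)p_{uv}(01)$ and, with $x=p_{uv}(01)/p_{uv}(00)$ (so $p_{uv}(11)/p_{uv}(10)=2x$), the relation $1=(1+4x)^{\alpha}/\big((1+x)^2(1+2x)^{2(\alpha-1)}\big)$ tying $\alpha$ to $x$. It then computes the full Hessian on the two-dimensional space of admissible $(I_{01},I_{10})$ (not a single direction) and extracts the necessary condition $\alpha\le\frac{1+4x}{4x(1+x)}$, which forces $x\in(0,\tfrac12]$. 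Finally it shows the first- and second-order requirements are incompatible by studying $g(x)=\frac{(1+x)^2}{(1+2x)^2}-\big(\frac{1+4x}{(1+2x)^2}\big)^{\frac{1+4x}{4x(1+x)}}$, which is $0$ at $x=0$ and strictly negative on $(0,\tfrac12)$ (verified graphically in the paper). So the missing ingredient in your proposal is exactly this quantitative interplay between the first-order \emph{equalities} and the second-order bound, reduced to a one-parameter inequality in $x$; without it (or some substitute of equal strength), your dichotomy does not follow. If you want to pursue your route, you would in effect have to reprove this $g(x)\le 0$ statement anyway, at which point you are reconstructing the paper's argument.
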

\begin{proof}
  Consider a perturbation of the form $p_\eps(u,v)=p(uv)(1 + \eps L(u,v))$,
where $\sum_{u,v}L(u,v) p(u,v|x) = 0$ for all $x$.
The first derivative conditions for a local maximum imply that
$$ \sum_{z} \qmf(z|0) \log \frac{p_{vz}(1z) p_{uv}(00)}{p_{vz}(0z) p_{uv}(01)} = 0,$$
$$ \sum_y \qmf(y|0) \log \frac{p_{uy}(1y)^\a p_u(0)^{(\a-1)} p_{uv}(00)}{p_{uy}(0y)^\a p_u(1)^{(\a-1)} p_{uv}(10)} = 0.$$

For BSSC we have $\qmf(y|0) = \frac 12, y \in \{0,1\}$ and $\qmf(y|1) = 1, y=1$ and vice-versa for $Z$.

Substituting this into first derivative conditions we obtain
$$ \frac{p_{vz}(10) p_{uv}(00)}{p_{vz}(00) p_{uv}(01)} = 1. $$
$$ \frac{p_{uy}(10)^\a p_u(0)^{(\a-1)} p_{uv}(00)}{p_{uy}(00)^\a p_u(1)^{(\a-1)} p_{uv}(10)} \frac{p_{uy}(11)^\a p_u(0)^{(\a-1)} p_{uv}(00)}{p_{uy}(01)^\a p_u(1)^{(\a-1)} p_{uv}(10)} = 1. $$

The first of the above conditions is equivalent to
$$\frac{p_{vz}(10) p_{uv}(00)}{p_{vz}(00) p_{uv}(01)} =  \frac{(p_{uv}(01) + \frac 12 p_{uv}(11)) p_{uv}(00)}{(p_{uv}(00) + p_{uv}(10)) p_{uv}(01)} = 1. $$
or that
$$ \frac 12 p_{uv}(11) p_{uv}(00) = p_{uv}(10) p_{uv}(01). $$

The second of the above conditions can be written as
\begin{align*}
1 &= \frac{p_{uy}(10)^\a p_u(0)^{2(\a-1)} p_{uv}(00)^2}{p_{uy}(00)^\a p_u(1)^{2(\a-1)} p_{uv}(10)^2} \frac{p_{uy}(11)^\a }{p_{uy}(01)^\a } \\
& = \frac{p_{uv}(10) ^\a (p_{uv}(00)+p_{uv}(01))^{2(\a-1)} p_{uv}(00)^2}{(p_{uv}(00)+p_{uv}(01))^\a (p_{uv}(10) + p_{uv}(11))^{2(\a-1)} p_{uv}(10)^2}\\&\quad\quad\times \frac{(p_{uv}(10) + 2p_{uv}(11))^\a }{(p_{uv}(00)+p_{uv}(01))^\a } \\
& = \frac{(1 + 2\frac{p_{uv}(11)}{p_{uv}(10)})^\a  }{(1+\frac{p_{uv}(01)}{p_{uv}(00)})^2 ( 1+ \frac{p_{uv}(11)}{p_{uv}(10)})^{2(\a-1)} }.
\end{align*}

Let $x = \frac{p_{uv}(01)}{p_{uv}(00)}$. Then from the first
condition we have
$\frac{p_{uv}(11)}{p_{uv}(10)} = 2x$. The second condition becomes
$$ 1 = \frac{(1 + 4x)^\a  }{(1+x)^2 ( 1+ 2x)^{2(\a-1)} }. $$

The second derivative conditions imply the following. Note that the expression we are dealing with is essentially $$(\a-1)H(U) + H(UV) - \a H(UY) - H(VZ)$$, since $H(Y)$ and $H(Z)$ are fixed.

Hence we would like to show that for all valid multiplicative perturbations, i.e. perturbations with 
$E(L|X) = 0$, as above, we have
\begin{align*}(\a - 1)\E(\E(L|U)^2) + \E(L^2) &- \a \E(\E(L|U,Y)^2)\\&~~~ - \E(\E(L|V,Z)^2) \geq 0. \end{align*}

Let $I_{kl} := p_{uv}{kl} L(k,l)$ for $k, l \in \{0,1\}$. Observe that 
$I_{00} = -I_{01} + I_{10}$ follows from the condition $E(L|X) = 0$. Computing the terms
\begin{align*}
\E(\E(L|U)^2) &= \frac{I_{10}^2}{p_{uv}(00) + p_{uv}(01)} + \frac{I_{10}^2}{p_{uv}(10) + p_{uv}(11)}, \\
\E(L^2) &= \frac{I_{00}^2}{p_{uv}(00)} + \frac{I_{01}^2}{p_{uv}(01)} + \frac{I_{10}^2}{p_{uv}(10)}, \\
\E(\E(L|U,Y)^2) &= \frac{I_{10}^2}{p_{uv}(00) + p_{uv}(01)} +  \frac{I_{10}^2}{2p_{uv}(10)} +\\&\qquad\qquad  \frac{I_{10}^2}{2(p_{uv}(10) + 2 p_{uv}(11))},\\
\E(\E(L|V,Z)^2) &=  \frac{I_{01}^2}{p_{uv}(00)+p_{uv}(10)} +  \frac{2I_{01}^2}{2p_{uv}(01) +  p_{uv}(11)}.
\end{align*}

Let $G$ be the negative of the Hessian. Using $I_{00}^2 = 
I_{01}^2 + 2 I_{01} I_{10} + I_{10}^2$, the quadratic form defined by $G$ can be
written as $G_{00} I_{01}^2 + (G_{01} + G_{10})I_{01} I_{10} + 
G_{11} I_{10}^2$, where 
\begin{align*}
G_{00} & = \frac{1}{p_{uv}(01)} + \frac{1}{p_{uv}(00)} - \frac{1}{p_{uv}(00)+p_{uv}(10)}\\&\quad - \frac{2}{2p_{uv}(01) +  p_{uv}(11)}, \\
G_{01} = G_{10} & = \frac{1}{p_{uv}(00)},\\
G_{11} & =  \frac{\a - 1}{p_{uv}(00) + p_{uv}(01)} + \frac{\a - 1}{p_{uv}(10) + p_{uv}(11)} \\&\quad+ \frac{1}{p_{uv}(10)} + \frac{1}{p_{uv}(00)} - \frac{\a}{p_{uv}(00) + p_{uv}(01)}\\
& \qquad - \frac{\a}{2p_{uv}(10)} - \frac{\a}{2(p_{uv}(10) + 2 p_{uv}(11))} \\
& = \frac{1}{p_{uv}(10)} + \frac{1}{p_{uv}(00)}  - \frac{1}{p_{uv}(00) + p_{uv}(01)} \\
& \qquad + \frac{\a - 1}{p_{uv}(10) + p_{uv}(11)}  - \frac{\a}{2p_{uv}(10)} \\&\quad- \frac{\a}{2(p_{uv}(10) + 2 p_{uv}(11))}.
\end{align*}

Using $p_{uv}(00) p_{uv}(11) = 2 p_{uv}(10) p_{uv}(01)$ we can write the term $G_{00}$ as
\begin{align*}
G_{00} & = \frac{1}{p_{uv}(01)} + \frac{1}{p_{uv}(00)} - \frac{1}{p_{uv}(00)+p_{uv}(10)} \\&\qquad- \frac{2}{2p_{uv}(01) +  p_{uv}(11)} \\
& = \frac{1}{p_{uv}(01)} + \frac{1}{p_{uv}(00)} - \frac{1}{p_{uv}(00)+p_{uv}(10)} \\&\qquad- \frac{p_{uv}(00)}{p_{uv}(01)(p_{uv}(00) +  p_{uv}(10))} \\
& = \frac{(p_{uv}(00) + p_{uv}(01)) p_{uv}(10)}{(p_{uv}(00) + p_{uv}(10)) p_{uv}(01) p_{uv}(00)}.
\end{align*}

Hence for $G$ to be positive semi-definite we require
\begin{align*} &\frac{\a - 1}{p_{uv}(10) + p_{uv}(11)}  - \frac{\a}{2p_{uv}(10)} - \frac{\a}{2(p_{uv}(10) + 2 p_{uv}(11))} \geq\\&\qquad - \frac{p_{uv}(00)}{p_{uv}(10)(p_{uv}(00) +  p_{uv}(01))}. \end{align*}

Multiplying by $p_{uv}(10)$ on both sides and noting that $x=\frac{p_{uv}(01)}{p_{uv}(00)} = \frac 12 \frac{p_{uv}(11)}{p_{uv}(10)}$ we can rewrite this necessary condition as
$$ \frac{\a - 1}{1 + 2x}  - \frac{\a}{2} - \frac{\a}{2(1 + 4x)} \geq - \frac{1}{(1 +  x)}.$$

This reduces to
$$ \alpha \leq \frac{1 + 4x}{(1+x)4x}. $$

Thus for a local maximum to occur we need that there is an $x \in (0, \infty)$ satisfying the following two conditions:
\begin{align*}
&1 \leq \a \leq \frac{1 + 4x}{(1+x)4x}, \\
 &1 = \frac{(1 + 4x)^\a  }{(1+x)^2 ( 1+ 2x)^{2(\a-1)} }.
\end{align*}

The second condition implies that
$$ \a \log \frac{(1 + 4x)  }{ ( 1+ 2x)^{2} } = \log \frac{(1+x)^2}{( 1+ 2x)^{2}}. $$

For $1  \leq \frac{1 + 4x}{(1+x)4x}$ (from first condition) to hold for some $x \in (0, \infty)$ we need $x \in (0, \frac 12]$.

Plugging the value of $\a$ from the second condition, we also require that
$$ \log \frac{(1+x)^2}{( 1+ 2x)^{2}} \geq \frac{1 + 4x}{(1+x)4x} \log \frac{(1 + 4x)  }{ ( 1+ 2x)^{2} }. $$
(Note the negativity of $\log \frac{(1 + 4x)  }{ ( 1+ 2x)^{2} }$ when $x \in (0, \frac 12]$.)

This is equivalent to
$$ \frac{(1+x)^2}{( 1+ 2x)^{2}} \geq \Big(\frac{(1 + 4x)  }{ ( 1+ 2x)^{2} }\Big)^{\frac{1 + 4x}{(1+x)4x}}. $$

Define
$$ g(x) =  \frac{(1+x)^2}{( 1+ 2x)^{2}} - \Big(\frac{(1 + 4x)  }{ ( 1+ 2x)^{2} }\Big)^{\frac{1 + 4x}{(1+x)4x}}. $$

Plotting $g(x)$ in the interval $[0, \frac 12]$ we see  that $g(0)=0$ and it is strictly negative and decreasing in $(0, \frac 12)$. Hence there is no $x$ simultaneously satisfying both the first and second derivative conditions for BSSC when $\a > 1$. This establishes the AND case for BSSC.

\begin{figure}
  \begin{center}
    \includegraphics[width=0.5\linewidth]{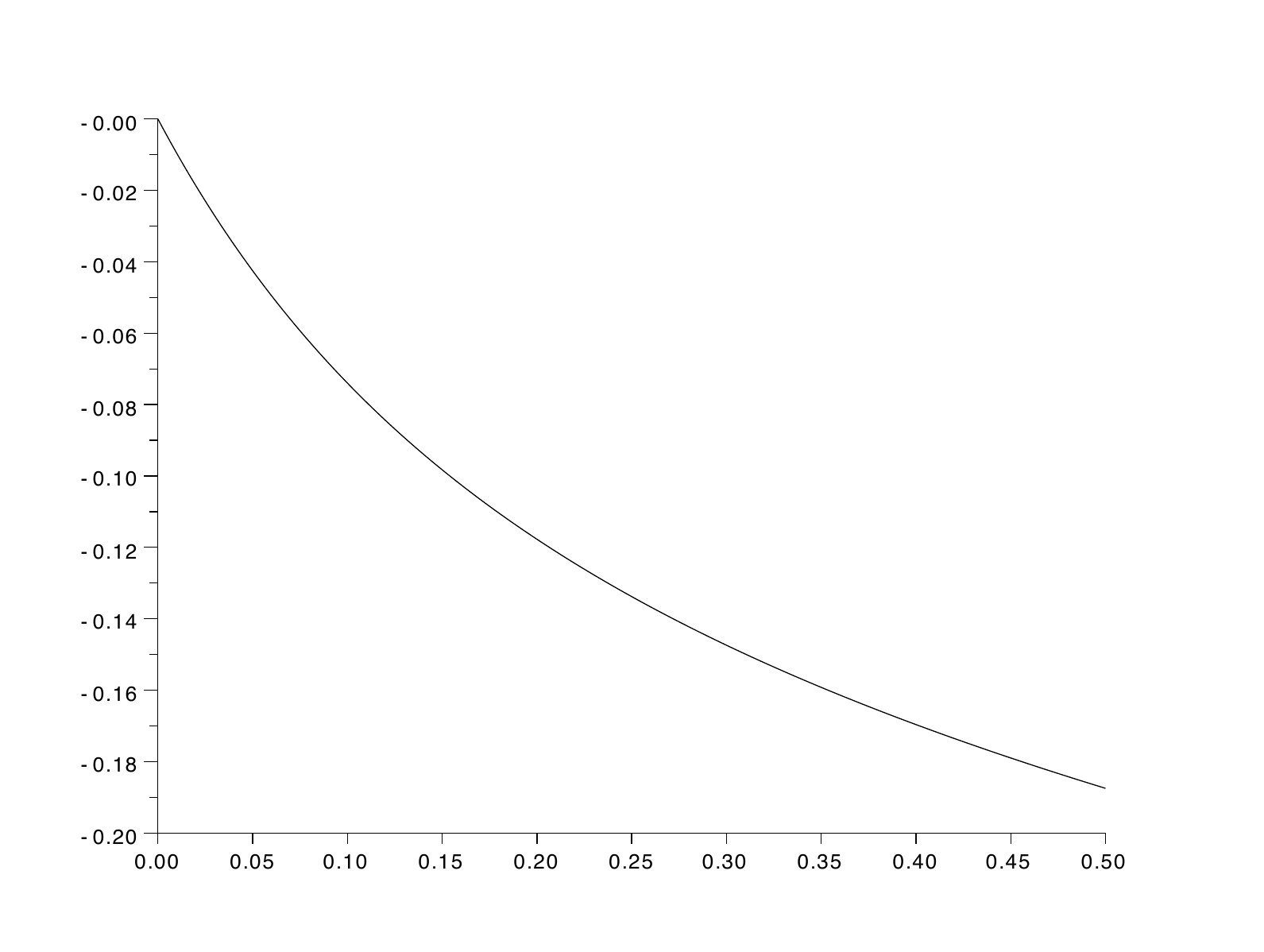}
  \end{center}
  \caption{The plot of $g(x)$ for $x \in [0, \frac 12]$}
\end{figure}

\end{proof}

\subsection{A counterexample}

We will produce a counterexample to the following statement: for $|\c X|=2$ the following inequality
$$\alpha I(U;Y) + I(V;Z) -I(U;V) \leq \max \{\alpha I(X;Y),I(X;Z)\},$$
holds for any $\alpha>1$ and any Markov chain $(U,V)\to X\to (Y,Z)$.

Consider the following setting: The channels are:
\begin{align*}
p(Y|X) =
\begin{bmatrix}
0.5 & 0.5 \\
0 & 1
\end{bmatrix}, \quad
p(Z|X) =
\begin{bmatrix}
1 & 0 \\
0.1 & 0.9
\end{bmatrix}.
\end{align*}
The parameters are
$$p(X) = [0.8, ~ 0.2], \quad \alpha = \frac{I(X;Z)}{I(X;Y)} = 3.429517.$$
The choice of $\alpha$ is actually the corner point for $RHS$.
The result is
$$LHS = 0.593020 > 0.586278 = RHS,$$
where $LHS$ is obtained by the p.m.f. and mapping $X=f(U,V)$ as
\begin{align*}
p(U,V) =
\begin{bmatrix}
0.05930 & 0.00005 \\
0.14065 & 0.80000
\end{bmatrix}, \quad
f(U,V) =
\begin{bmatrix}
1 & 1 \\
1 & 0
\end{bmatrix}.
\end{align*}
This is AND case, exactly the case we cannot prove in general.

We also plot the figure for $LHS$ and $RHS$ w.r.t. $\alpha$:
\begin{figure}[h]
	\centering
	\includegraphics[width=0.5\textwidth]{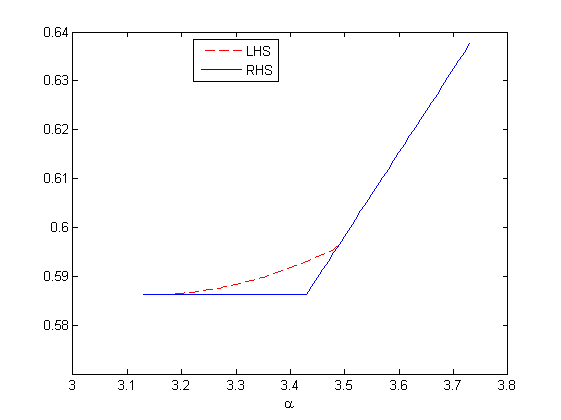}
\end{figure}

{\em Acknowledgment}: The authors wish to thank Yanlin Geng for obtaining this counterexample.

\subsection{A geometric interpretation for  $c_{p(u,x)}$}

We know that $c_{p(u,x)}$ is the minimum value of $c$ such that
the function $q(x) \mapsto H(U)-cH(X)$ when $p(u|x)$ is fixed, matches its convex envelope at $p(x)$. The $\max_{p(x)}c_{p(u,x)}$ would be the minimum value of $c$ such that $q(x) \mapsto H(U)-cH(X)$ is completely convex in $p(x)$ for a fixed $p(u|x)$. Let us fix some $p(u|x)$ and let $c'_{p(u,x)}$ be the minimum value of $c$ such that
the function $q(x) \mapsto H(U)-cH(X)$ is convex at $p(x)$. We will then have that $\max_{p(x)}c_{p(u,x)}=\max_{p(x)}c'_{p(u,x)}$.

The term $c'_{p(u,x)}$ has a nice geometric interpretation. We begin by using the perturbation method and perturb $p(u,x)$ along a direction $L(X)$ such that $E[L]=0$ (therefore we are fixing $p(u|x)$). The second derivative of $H(U)-c H(X)$ along this direction is
$$-E(E(L|U)^2)+c E(E(L|X)^2) = -E(E(L|U)^2)+c E(L^2).$$
If we want this to be greater than or equal to zero, it implies that $c \geq \frac{E(E(L|U)^2)}{E(L^2)}$ for all $L(X)$ such that $E[L]=0$.

Given a fixed sample space, $\Omega$, the set of all square-integrable real-valued random variables 
forms a vector space $\mathcal{V}$ with normal addition and scalar multiplication of random variables. We define the inner product between two random variables X and Y to be $\mathbb{E}(XY)$.

The set of all real-valued functions of $X$
is itself a linear subspace of $\mathcal{V}$. Let us denote this set 
by $\mathcal{V}_X$. We can similarly define the set of all real-valued 
random variables that are functions of $U$ and denote it by $\mathcal{V}_U$.

Now, let us use $\textbf{1}$ to denote the random variable that takes value 1 with probability 1. The set of 
square-integrable real-valued random variables that are perpendicular to $\mathbf{1}$ are the ones with zero expected value. Let us denote the set of these random variables, itself a linear subspace, by $\mathcal{V}_{\textbf{1}\bot}$.

Note that $\textbf{1}\in\mathcal{V}_X$ and $\textbf{1}\in\mathcal{V}_U$. Let us define the following two subspaces:
$$\mathcal{V}'_X=\mathcal{V}_X \cap \mathcal{V}_{\textbf{1}\bot},$$
$$\mathcal{V}'_U=\mathcal{V}_U \cap \mathcal{V}_{\textbf{1}\bot}.$$
Now, the perturbation method says that we should take some $L(X)$ in $\mathcal{V}'_X$. Its projection onto $\mathcal{V}_U$ is equal to $E(L|U)$, its squared length being $E(E(L|U)^2)$. Note that the projection onto $\mathcal{V}_U$ is the same as the projection onto $\mathcal{V}'_U$ because all the action is taking place in $ \mathcal{V}_{\textbf{1}\bot}$. The expression $\frac{E(E(L|U)^2)}{E(L^2)}$ is the cosine squared of the angle formed by vector $L$ and its projection onto $\mathcal{V}'_U$. The term $c $ should dominate all such cosine-squared values when $L$ freely changes over $\mathcal{V}'_X$. Thus, $c'_{p(u,x)}$ has to be the cosine-squared of the angle between the two subspaces $\mathcal{V}'_U$ and $\mathcal{V}'_X$. This is because we are taking an arbitrary vector $L$ in $\mathcal{V}'_X$, then finding the vector in $\mathcal{V}'_U$ that has the smallest angle with $L$, i.e. its projection of $L$ onto $\mathcal{V}'_U$, and then computing their cosine-squared expression.

Note that if the Gacs-Korner common information between $U$ and $X$ is non-trivial, then the angle between the two subspaces $\mathcal{V}'_U$ and $\mathcal{V}'_X$ is zero (because the intersection of $\mathcal{V}'_X$ and $\mathcal{V}'_U$ will be non-trivial). Otherwise, the angle between the two subspaces is strictly positive. It is worth noting that the angle between the two subspaces $\mathcal{V}'_U$ and $\mathcal{V}'_X$ has a \emph{symmetric} definition. Therefore the minimum value of $c$ such that the function $q(x)\mapsto H(U)-cH(X)$ is convex at the given $p(x)$, is the same as the minimum value of $c$ such that $q(u)\mapsto H(X)-cH(U)$ is convex at the given $p(u)$.

To compute the cosine of the angle between the two subspaces it suffices to take two arbitrary vectors in these two subspaces and maximize the cosine of their angle. We can express the cosine of the angle using Pearson's correlation coefficient between two variables:
$$c'_{q(u,x)}=(\max_{L:\mathcal{X}\mapsto \mathbb{R},~T:\mathcal{U}\mapsto \mathbb{R}}\frac{Cov\big(L(X), T(U)\big)}{\sqrt{Var(L(X))}\times \sqrt{Var(T(U))}})^2.$$
Note that here the maximization is over arbitrary functions $L$ and $T$, and the requirement that $\mathbb{E}[L]=\mathbb{E}[T]=0$ is relaxed. Another formula for $c'_{q(u,x)}$ is the maximum of $(\mathbb{E}[L(X)T(U)])^2$ over all $L(X)$ and $T(U)$ satisfying $\mathbb{E}[L(X)]=0$, $\mathbb{E}[L(X)^2]=1$, $\mathbb{E}[T(U)]=0$ and $\mathbb{E}[T(U)^2]=1$. This is a simple optimization problem and can be dealt with using the Lagrange multipliers technique. It gives rise to other analytical formulas for $c'_{q(u,x)}$.

\end{document}